\newcolumntype{C}[1]{>{\centering\let\newline\\\arraybackslash\hspace{0pt}}m{#1}}
\newcommand{\comment}[1]{}
\definecolor{winered}{rgb}{0.5,0,0}
\newtheorem{theorem}{Theorem}[section]
\newtheorem{lemma}[theorem]{Lemma}
\newtheorem{definition}[theorem]{Definition}
\newtheorem{corollary}[theorem]{Corollary}
\newtheorem{observation}[theorem]{Observation}
\newcommand{\ignore}[1]{}
\newcommand{\scon}{\leftrightarrow}
\newcommand{\predecessors}[2]{\mathit{Pred}_{#1}(#2)}
\newcommand{\successors}[2]{\mathit{Succ}_{#1}(#2)}
\newcommand{\dectree}{\mathcal{T}}
\newcommand{\ftreach}[4]{2\mathit{FTR}_{#1}(#2,#3,#4)}
\newcommand{\rftreach}[4]{2\mathit{FTR}^R_{#1}(#2,#3,#4)}
\newcommand{\ftsc}[3]{1\mathit{FTSC}(#1,#2,#3)}
\newcommand{\twoftsc}[4]{2\mathit{FTSC}(#1,#2,#3,#4)}
\newcommand{\note}[1]{\textbf{(*)}\marginpar {\tiny \raggedright{(*) #1}}}
\begin{document}

\title{\bf $2$-Fault-Tolerant Strong Connectivity Oracles\thanks{Department of Computer Science \& Engineering, University of Ioannina, Greece. E-mail: \texttt{\{loukas,ekosinas,dtsokaktsis\}@cse.uoi.gr}. Research supported by the Hellenic Foundation for Research and Innovation (H.F.R.I.) under the ``First Call for H.F.R.I. Research Projects to support Faculty members and Researchers and the procurement of high-cost research equipment grant'', Project FANTA (eFficient Algorithms for NeTwork Analysis), number HFRI-FM17-431.}}

\author{Loukas Georgiadis \and Evangelos Kosinas \and  Daniel Tsokaktsis}

\date{}

\maketitle

\begin{abstract}
We study the problem of efficiently answering strong connectivity queries under two vertex failures.
Given a directed graph $G$ with $n$ vertices, we provide a data structure with $O(nh)$ space and $O(h)$ query time, where $h$ is the height of a decomposition tree of $G$ into strongly connected subgraphs.
This immediately implies data structures with $O(n \log{n})$ space and $O(\log{n})$ query time for graphs of constant treewidth, and $O(n^{3/2})$ space and $O(\sqrt{n})$ query time for planar graphs.
For general directed graphs, we give a refined version of our data structure that achieves $O(n\sqrt{m})$ space and $O(\sqrt{m})$ query time, where $m$ is the number of edges of the graph.
We also provide some simple BFS-based heuristics that seem to work remarkably well in practice.
In the experimental part, we first evaluate various methods to construct a decomposition tree with small height $h$ in practice.
Then we provide efficient implementations of our data structures, and evaluate their empirical performance by conducting an extensive experimental study 
on graphs taken from real-world applications.
\end{abstract}

\section{Introduction}
\label{section:introduction}

Fundamental graph properties such as (strong) connectivity and reachability have been extensively studied for both undirected and directed graphs.
As real world networks are prone to failures, which can be unpredictable, the fault-tolerant (or sensitivity) model
has drawn the attention of several researchers in the recent past \cite{CC20:ICALP,choudhary:ICALP16,DBLP:journals/talg/Charalampopoulos22,Demetrescu:2008,DuanP09a,henzinger_et_al:LIPIcs:2017:8178,ParterP13,BrandSaranurak2019}.
Instead of allowing for an arbitrary sequence of updates, the fault-tolerant model only allows to apply batch
updates of small size to the original input data.
In this work we focus on constructing a data structure (oracle) that can answer strong connectivity queries between two vertices of a given directed graph (digraph) under any two vertex failures.

A strongly connected component (SCC) of a directed graph $G=(V,E)$ 
is a maximal subgraph of $G$ in which there is a directed path from each vertex to every other vertex.
The strongly connected components of $G$ partition the vertices of $G$ such that two vertices $x,y \in V$ are strongly connected (denoted by $x \leftrightarrow y$) if they belong to the same strongly connected component of $G$.
Computing the strongly connected components of a directed graph is one of the most fundamental graph problems that finds numerous applications in many diverse areas.
%
Thus, we would like to study efficient algorithms for determining the strong connectivity relation in the fault-tolerant model.
Towards such a direction, we wish to compute a small-size data structure for reporting efficiently whether two vertices are strongly connected under the possibility of vertex failures.
Usually, the task is to keep a data structure (oracle) that supports queries of the following form: for any two vertices $x,y$ and any set $F$ of $k$ vertices determine whether $x$ and $y$ are strongly connected in $G - F$.
More formally, we aim to construct an efficient fault-tolerant strong-connectivity oracle under possible
(bounded) failures.

\begin{definition}[FT-SC-O]
\label{def:ftsccoracle}
Given a graph $G=(V,E)$, a $k$-fault-tolerant strong-connectivity oracle ($k$-FT-SC-O) of $G$ is a data structure that, given a set of $k$ failed vertices\footnote{We note that a $k$-FT-SC-O construction for vertex failures can provide an oracle for intermixed vertex and edge failures by splitting every edge $e=(x,y)$ of the graph as $(x,e),(e,y)$, with the introduction of a new vertex $e$. Then, the removal of an edge $e$ from the original graph can be simulated by the removal of the vertex $e$ from the derived graph.} $f_1, \ldots, f_k\in V$ and two query vertices $x,y\in V$, it can determine (fast) whether $x$ and $y$ are strongly connected in $G-\{f_1, \ldots, f_k\}$.
\end{definition}

To measure the efficiency of an oracle, two main aspects are concerned: the size of the computed data structure and the running time for answering any requested query.
Ideally, we would aim for linear-size oracles with constant query time, but this seems out of reach for many problems~\cite{henzinger_et_al:LIPIcs:2017:8178}.
For instance, it is known that for a single vertex/edge failure (i.e., $k=1$) an oracle with $O(n)$ space and $O(1)$ query time is achievable~\cite{GIP20:SICOMP}.
However, for a larger number of failures (i.e., $k>1$) the situation changes considerably.
Even for $k=2$, straightforward approaches would lead to an oracle of $O(n^2)$ size with constant query time.

\subsection{Related work}

Maintaining the strongly connected components under vertex/edge updates 
has received much attention, both
in the 
dynamic setting, where the updates 
are permanent,
and in the fault-tolerant model, where the failures are part of the query.
%

\paragraph*{Fault-tolerant data structures.}
Baswana, Choudhary, and Roditty~\cite{baswana_et_al:ALGO:2019} presented a data structure of size $O(2^k n^2)$ that is computed in $O(2^k n^2 m)$ time, and outputs all strongly connected components in $O(2^k n \log^2{n})$ time under at most $k$ failures.
For $k=1$, Georgiadis, Italiano, and Parotsidis~\cite{GIP20:SICOMP} gave an $O(n)$-space single-fault strong connectivity oracle ($1$-FT-SC-O) that can report all strongly connected components in $O(n)$ time, and test strong connectivity for any two vertices in $O(1)$ time, under a single vertex/edge failure. 
A closely related problem is to be able to maintain reachability information under failures, either with respect to a fixed source vertex $s$ (single-source reachability) or with respect to a set of vertex pairs $\mathcal{P} \subseteq V \times V$ (pairwise reachability).
Choudhary~\cite{choudhary:ICALP16} presented a $2$-fault-tolerant single-source reachability oracle ($2$-FT-SSR-O) with $O(n)$ space that answers in $O(1)$ time whether a vertex $v$ is reachable from the source vertex $s$ in $G-\{f_1,f_2\}$, where $f_1,f_2$ are two failed vertices.
Later, Chakraborty, Chatterjee, and Choudhary~\cite{chakraborty_et_al:LIPIcs.ICALP.2022.35}, gave a $2$-fault-tolerant pairwise reachability oracle ($2$-FT-R-O) with $O(n \sqrt{|\mathcal{P}|})$ size that answers in $O(1)$ time whether a vertex $u$ reaches a vertex $v$ in $G-\{f_1,f_2\}$, for any pair $(u,v) \in \mathcal{P}$.
The above results imply $2$-FT-SC oracles of $O(n^2)$ size and $O(1)$ query time, either by storing a $1$-FT-SC-O~\cite{GIP20:SICOMP} of $G-v$ for all $v \in V$, or by storing a $2$-FT-SSR-O~\cite{choudhary:ICALP16} for all $v \in V$ as sources, or by setting $\mathcal{P} = V \times V$ in \cite{chakraborty_et_al:LIPIcs.ICALP.2022.35}.
Recently, van den Brand and Saranurak~\cite{BrandSaranurak2019} presented a Monte Carlo sensitive reachability oracle that preprocesses a digraph with $n$ vertices in $O(n^\omega)$ time and stores $O(n^2 \log{n})$ bits. Given a set of $k$ edge insertions/deletions and vertex deletions, the data structure is updated in $O(k^\omega)$ time and stores additional $O(k^2 \log{n})$ bits. Then,
given two query vertices $u$ and $v$, the oracle reports if there is directed path from $u$ to $v$ in $O(k^2)$ time.

For planar graphs, Italiano, Karczmarz, and Parotsidis~\cite{ftreachplanar:SODA21} showed how to construct a single-fault-tolerant all-pairs reachability oracle of $O(n \log n)$-space that answers in $O(\log{n})$ time whether a vertex $u$ reaches a vertex $v$ in $G-f$, where $f$ is a failed vertex or edge.
%
For more vertex failures, in weighted planar digraphs, Charalampopoulos, Mozes, and Tebeka~\cite{DBLP:journals/talg/Charalampopoulos22} provided an $O(n\log^2{n})$-time construction of an oracle of $O(n\log n)$ size, that can answer \emph{distance} queries for pairs of vertices in the presence of $k$ vertex failures in $O(\sqrt{kn}\log^2{n})$ time. This implies an oracle of $O(n\log n)$ size that can answer $2$-FT-SC queries in planar digraphs in $O(\sqrt{n}\log^2{n})$ time.

All the previous approaches yield data structures that require $\Omega(n^{2})$ space (for general digraphs), which is prohibitive for large networks. Thus, it is natural to explore the direction of trading-off space with query time.
Furthermore, within the fault-tolerant model, one may seek to compute a sparse subgraph $H$ of $G$ (called preserver) that enables to answer
(strong connectivity or reachability) queries under failures in $H$ instead of $G$, which can be done more efficiently since $H$ is sparse.
Chakraborty and Choudhary~\cite{CC20:ICALP} provided the first sub-quadratic (i.e., $O(n^{2-\epsilon})$-sized for $\epsilon >0$) subgraph that preserves the strongly connected components of $G$ under $k\geq 2$ edge failures, by showing the existence of a preserver of size $\widetilde{O}(k 2^k n^{2-1/k})$ that is computed by a polynomial (randomized) algorithm.
%
%

\paragraph*{Dynamic data structures.}
An alternative approach for answering queries under failures is via dynamic data structures.
%
To answer a query of the form: {\em ``Are $x$ and $y$ strongly connected in $G-\{f_1,f_2\}$?''}, for two failed vertices/edges $f_1$ and $f_2$, we can first delete $f_1$ and $f_2$, by updating the data structure, and then answer the query. To get ready to answer the next query we have to re-insert the deleted vertices/edges.
%
%
The main problem with this approach is that the update operation is often too time-consuming and leads to bad query time.
Furthermore, there is a conditional lower bound of $\Omega(m^{1-o(1)})$ update time for a single vertex (or edge) deletion for general digraphs \cite{AW14,HenzingerKNS15}.

Charalampopoulos, Mozes, and Tebeka ~\cite{DBLP:journals/talg/Charalampopoulos22} showed that a weighted planar digraph $G$ can be preprocessed in time $O(n\frac{\log^2{n}}{\log{\log{n}}})$, so that edge-weight updates, edge insertions not violating the planarity of $G$, edge deletions, vertex insertions and deletions, and distance queries can be performed in time $O(n^{2/3}\frac{\log^{5/3}{n}}{\log^{4/3}{\log{n}}})$ using $O(n)$ space.
Hence, this implies an $O(n)$-space data structure that can answer strong connectivity queries between two vertices under two edge failures in planar digraphs in $O(n^{2/3}\frac{\log^{5/3}{n}}{\log^{4/3}{\log{n}}})$ time.

\subsection{Our contribution}

We provide a general framework for computing dual fault-tolerant strong connectivity oracles based on a decomposition tree $\dectree$ of a digraph $G$ into strongly connected subgraphs.
Following \L{}\k{a}cki~\cite{scc-decomposition}, we refer to $\dectree$ as an \emph{SCC-tree of $G$}.
%
Specifically, the SCC-tree is obtained from $G$ by iteratively removing vertices in a specified order and computing the SCCs after every removal; the root of $\dectree$ corresponds to $G$, and the children of every node of $\dectree$ correspond to the SCCs of the parent node after removing a vertex.
We analyze our oracle with respect to the height $h$ of $\dectree$ 
(which depends on the chosen order of the removed vertices).
Then, by storing some auxiliary data structures~\cite{choudhary:ICALP16,GIP20:SICOMP} at each node of $\dectree$, we obtain the following result:

\begin{theorem}
\label{thm:dectree}
Let $G=(V,E)$ be a digraph on $n$ vertices and let $h$ be the height of an SCC-tree of $G$.
There is a polynomial-time algorithm that computes a $2$-FT-SC oracle for $G$ of size $O(nh)$ 
that answers strong connectivity queries  under two vertex failures in $O(h)$ time.
\end{theorem}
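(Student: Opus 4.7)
The plan is to exploit the recursive structure of $\dectree$ so that a two-fault strong-connectivity query reduces to $O(h)$ independent single-source reachability checks along the root-to-LCA path. In the preprocessing, at every internal node $t$ of $\dectree$ with designated vertex $v_t$ and vertex set $V_t$, I attach two copies of Choudhary's $2$-FT single-source reachability oracle~\cite{choudhary:ICALP16}: one with source $v_t$ on $G[V_t]$, and one with source $v_t$ on the reverse of $G[V_t]$, each of size $O(|V_t|)$ with $O(1)$ query time. Together they decide in $O(1)$ whether a vertex $u\in V_t$ is strongly connected to $v_t$ inside $G[V_t] - \{f_1,f_2\}$, where any failure not in $V_t$ is simply filtered out before querying. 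Since every vertex $v\in V$ belongs to $V_t$ only on the path from the root of $\dectree$ to the unique node at which $v$ is removed as $v_t$, one has $\sum_t |V_t| = O(nh)$, yielding total space $O(nh)$. I also build standard machinery to obtain LCAs and to list root-to-node ancestor paths in $\dectree$ in $O(h)$ time.

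Correctness rests on the following disjunction, which I would prove by a short case split on whether the SCC of $u$ in $G[V_t]-\{f_1,f_2\}$ contains $v_t$. For every internal node $t$ and every $u,w\in V_t$,
\[
u \leftrightarrow w \text{ in } G[V_t] - \{f_1,f_2\}
\;\iff\;
\text{(i) } u \leftrightarrow w \text{ in } G[V_t\setminus\{v_t\}] - \{f_1,f_2\},
\text{ or (ii) } u \leftrightarrow v_t \text{ and } w \leftrightarrow v_t \text{ in } G[V_t] - \{f_1,f_2\}.
\]
Because the SCCs of $G[V_t\setminus\{v_t\}]$ are precisely the vertex sets $V_c$ of the children of $t$, alternative (i) is exactly the same problem asked one level deeper inside whichever child contains both $u$ and $w$, and it fails if no such child exists. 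Unrolling the disjunction along the root-to-$t^*$ path $t_0,\ldots,t_k=t^*$, where $t^*$ is the LCA of $x$ and $y$ in $\dectree$, I obtain that $x\leftrightarrow y$ in $G - \{f_1,f_2\}$ iff (ii) holds at some $t_i$. At the endpoint $t^*$, condition (i) is automatically false because, by minimality of the LCA, $x$ and $y$ lie in distinct children of $t^*$; the degenerate cases $x=v_{t^*}$ or $y=v_{t^*}$ are also covered, as (ii) then collapses to the intended single reachability test.

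The query algorithm first disposes of trivial cases ($x$ or $y$ equal to a failure, or $x,y$ in different SCCs of $G$ as determined by a precomputed SCC partition). Then it retrieves $t_0,\ldots,t_k$ in $O(h)$ time and, for each $t_i$ with $v_{t_i}\notin\{f_1,f_2\}$, it tests (ii) at $t_i$ in $O(1)$ using the two stored oracles; ancestors with $v_{t_i}\in\{f_1,f_2\}$ make (ii) vacuously false and are simply skipped without breaking the disjunction. The answer is \emph{yes} if some level certifies (ii), else \emph{no}. The loop performs $O(h)$ iterations of $O(1)$ work, and preprocessing is polynomial since both the SCC-tree construction and each Choudhary oracle build run in polynomial time.

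The main obstacle is the structural disjunction; once it is in hand, everything else is bookkeeping. The subtle point is that (i) is \emph{confined} to the induced subgraphs $G[V_c]$ on the children of $t$ rather than to $G-v_t$ at large, which is precisely what makes the recursion self-contained and justifies storing Choudhary oracles on each $G[V_t]$ rather than on $G$.
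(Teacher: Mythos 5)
Your proposal is correct and follows the paper's approach closely: build the same SCC-tree, store forward and reverse $2$-FT-SSR oracles of \cite{choudhary:ICALP16} rooted at each split vertex, and walk from the root to the nearest common ancestor, at each level testing whether both query vertices are strongly connected to the split vertex in the local subgraph with failures removed. Your disjunction lemma is exactly the invariant that justifies the descent, and your argument that $\sum_t |V_t| = O(nh)$ gives the space bound is the same telescoping count the paper uses.

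The one place where you genuinely deviate is the handling of the case in which the current split vertex coincides with a failure. The paper, at the node where $t = f_1$, short-circuits: it stores an additional $1$-FT-SC oracle~\cite{GIP20:SICOMP} on each node's subgraph and returns $\ftsc{x}{y}{f_2}$ on the relevant child in $O(1)$. You instead observe that clause (ii) is vacuously false at such a node, and keep recursing with the now-irrelevant failure filtered out; the remaining $2$-FT-SSR oracles deeper in the path suffice because at most one live failure survives in any descendant subgraph, and if neither survives the induced subgraph is strongly connected by construction so (ii) is trivially true. Both variants are correct and give $O(h)$ query time and $O(nh)$ space, but yours needs one fewer auxiliary data structure per node, at the cost of possibly walking further down the path before terminating. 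The paper also runs a negative-answer short-circuit (its condition (C), a consequence of Observation~\ref{obervation:sc}) to return \textsc{false} early; you omit it, which is fine for correctness and asymptotics, since the unrolled disjunction already forces a \textsc{false} at the LCA when (ii) never holds, though for a practical implementation the check is a worthwhile optimization.
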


The construction time of the data structure in Theorem~\ref{thm:dectree} is $O(f(m,n)h)$, where $f(m,n)$ is the time to initialize a $2$-FT-SSR oracle of a graph with $n$ vertices and $m$ edges. In our experiments we implemented the oracle of Choudhary~\cite{choudhary:ICALP16}, which can be constructed in $f(m,n)=O(mn)$ time.

Despite the fact that there are graphs for which $h=\Omega(n)$, our experimental study reveals that the height of $\dectree$ is much smaller in practice.
To that end, we evaluate various methods to construct a decomposition tree with small height $h$ in practice.
We note that such SCC-trees are useful in various decremental connectivity algorithms (see, e.g., \cite{Chechik2016Decremental,decdom17,scc-decomposition}), so this experimental study may be of independent interest.

Theorem~\ref{thm:dectree} immediately implies the following results for special graph classes.

\begin{corollary}
\label{corollary:planar}
Let $G=(V,E)$ be a directed planar graph with $n$ vertices. 
There is a polynomial-time algorithm that computes a $2$-FT-SC oracle of $O(n\sqrt{n})$ size 
with $O(\sqrt{n})$ query time.
\end{corollary}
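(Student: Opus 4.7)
By Theorem~\ref{thm:dectree}, it suffices to show that every planar digraph on $n$ vertices admits an SCC-tree of height $h=O(\sqrt{n})$. The plan is to obtain this bound through the classical recursive application of the Lipton--Tarjan planar separator theorem, adapted to the vertex-deletion format of the SCC-tree.

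Starting from $G$, I would first invoke the planar separator theorem on the underlying undirected graph to obtain a set $S$ with $|S|=O(\sqrt{n})$ whose removal leaves connected components of size at most $2n/3$. I would then build the SCC-tree by the following rule at every node $N$: if the SCC represented by $N$ contains some vertex of $S$, remove one such vertex; otherwise $N$ lies entirely in one connected component of $G-S$, so $|N|\le 2n/3$, and I would then start a fresh recursive invocation of the same scheme on $N$, using a planar separator of $N$ this time. Planarity is preserved under taking subgraphs, so the recursion is well defined.

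Along any root-to-leaf path, the ``peeling-$S$'' regime contributes at most $|S|=O(\sqrt{n})$ levels, because each step strictly decreases the number of $S$-vertices still present in the current SCC. Intermediate SCCs during this peeling may temporarily be much larger than $2n/3$, but this affects only the branching of the tree and not its height. Once the path leaves this regime, the current SCC has size at most $2n/3$ and we recurse. Letting $h^*(k)$ be the worst-case height of an SCC-tree produced by this scheme on a planar digraph with $k$ vertices, I would obtain the recurrence
\[
h^*(n)\;\le\;c\sqrt{n}\,+\,h^*(\lfloor 2n/3\rfloor),
\]
which unrolls to $O(\sqrt{n})$ because the resulting series is geometric with ratio $\sqrt{2/3}<1$. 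Substituting $h=O(\sqrt{n})$ into Theorem~\ref{thm:dectree} yields the claimed $O(n\sqrt{n})$-size oracle with $O(\sqrt{n})$ query time; the construction is polynomial time because planar separators are computable in linear time and the construction of Theorem~\ref{thm:dectree} is itself polynomial time.

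The only point that really needs careful verification — and hence the main obstacle, if one can call it that — is the claim that intermediate SCCs during the peeling of $S$, although possibly large, do not inflate the height: each vertex removal adds exactly one level to the tree regardless of how many or how large the resulting child SCCs are, so only the \emph{length} of the peeling sequence on a single path, namely $|S|$, enters the height bound.
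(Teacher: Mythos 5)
Your argument is correct and follows the same route as the paper: bound the SCC-tree height by $O(\sqrt{n})$ using recursively applied Lipton--Tarjan planar separators, then plug into Theorem~\ref{thm:dectree}. The paper obtains the height bound by directly invoking \L{}\k{a}cki's SCC-decomposition lemma (Lemma~\ref{thm:decomp}) together with Theorem~\ref{thm:treewidth}'s planar counterpart, treating the separator-peeling argument as a black box, whereas you have unfolded the proof of that lemma inline for the planar case — the recurrence $h^*(n)\le c\sqrt{n}+h^*(\lfloor 2n/3\rfloor)$ and the observation that peeling one separator vertex at a time along a root-to-leaf path contributes at most $|S|$ levels regardless of how large the intermediate SCCs become are exactly the content of the cited lemma.
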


\begin{corollary}
\label{corollary:boundedtw}
Let $G=(V,E)$ be a directed graph, whose underlying undirected graph has treewidth 
bounded by a constant. 
There is a polynomial-time algorithm that computes a $2$-FT-SC oracle of $O(n\log{n})$ size 
with $O(\log{n})$ query time.
\end{corollary}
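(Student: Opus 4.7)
The plan is to invoke Theorem~\ref{thm:dectree} and reduce the corollary entirely to exhibiting an SCC-tree $\dectree$ of height $h = O(\log n)$ when the underlying undirected graph of $G$ has treewidth bounded by some constant $t$. Since Theorem~\ref{thm:dectree} already delivers a $2$-FT-SC oracle of size $O(nh)$ and query time $O(h)$, once the height bound is in hand the claim follows immediately.

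To bound $h$, I would exploit the classical fact that an undirected graph of treewidth $t$ admits a \emph{balanced separator} of size at most $t+1$: a vertex set $S$ whose removal leaves every connected component of size at most $|V|/2$. Treewidth is monotone under induced subgraphs, so every descendant SCC that appears during the recursive construction of $\dectree$ still has underlying treewidth $\le t$ and hence still admits such a separator. Moreover, since any SCC of a digraph lies inside a single connected component of the underlying undirected graph, removing $S$ from a current SCC $C$ splits it into sub-SCCs each of size at most $|C|/2$.

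The main subtlety — and in my view the only real obstacle — is the mismatch between the one-vertex-at-a-time definition of $\dectree$ and the multi-vertex nature of a treewidth separator. I would handle this by simulating the deletion of a whole balanced separator with a chain of $|S|\le t+1$ consecutive unit-vertex removals at the top of each recursive subproblem: pick an arbitrary ordering of $S$, use it to label the next $t+1$ levels of $\dectree$ below the node of $C$, and then recurse on the SCCs that survive at the bottom of that chain. After each such chain the SCC size has halved, so the sizes halve every $t+1$ levels, giving $h \le (t+1)\lceil \log_2 n \rceil = O(\log n)$ for constant~$t$.

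The remaining ingredients are standard: for constant treewidth, a tree decomposition of width $O(t)$ and the corresponding balanced separator of each induced subgraph can be computed in polynomial (in fact linear) time, so the full construction — including the auxiliary $2$-FT-SSR oracles attached at each node of $\dectree$ as in Theorem~\ref{thm:dectree} — runs in polynomial time. Feeding the resulting SCC-tree of height $O(\log n)$ into Theorem~\ref{thm:dectree} then yields an oracle of size $O(n \log n)$ with query time $O(\log n)$, as claimed.
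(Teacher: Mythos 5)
Your argument is correct and follows essentially the same route as the paper, which simply combines Lemma~\ref{thm:decomp} (\L{}\k{a}cki's observation that an $f(n)$-separable family admits an SCC-decomposition tree of the corresponding height, with $h=O(\log n)$ when $f$ is constant) with Theorem~\ref{thm:treewidth} (Reed: treewidth-$k$ graphs are $k$-separable, with linear-time separators for constant $k$) and then feeds the resulting height bound into Theorem~\ref{thm:dectree}; you re-derive the height bound of Lemma~\ref{thm:decomp} inline rather than citing it. One small imprecision in your chain-of-removals step is worth repairing: after removing $s_1\in S$ from $C$, some resulting SCC may contain none of $S\setminus\{s_1\}$, so you cannot literally label the next $t+1$ levels below the node of $C$ with a fixed ordering of $S$; the correct bookkeeping is per root-to-leaf path, where along any such path at most $t+1$ consecutive nodes use split vertices from a single separator $S$ before the surviving separator vertices are exhausted, at which point the current SCC lies inside one connected component of $C-S$ and so has shrunk by a constant factor, still yielding $O(\log n)$ phases and hence $h=O(\log n)$.
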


For general directed graphs, we also provide a refined version of our data structure that builds a  \emph{partial SCC-tree}, and achieves the following bounds.


\begin{theorem}
\label{thm:refined}
Let $G=(V,E)$ be a digraph on $n$ vertices and $m$ edges, and let $\Delta$ be an integer parameter in $\{1,\ldots,m\}$.
There is a polynomial-time algorithm\footnote{See Section~\ref{sec:improved} for more details on the construction time of the partial SCC-tree.} that computes a $2$-FT-SC oracle of $O(mn/\Delta)$ size with $O(m/\Delta + \Delta)$ query time.
\end{theorem}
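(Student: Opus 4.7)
The plan is to adapt the construction of Theorem~\ref{thm:dectree} by building a \emph{partial} SCC-tree $\dectree_\Delta$ in which the recursive decomposition is halted at every node $t$ whose SCC $S_t$ already contains at most $\Delta$ edges. At each leaf $t$ of $\dectree_\Delta$ we simply store the subgraph $G[V_t]$ explicitly in $O(\Delta)$ space, so that a query localized to $V_t$ can be answered directly by recomputing the SCCs of $G[V_t]-\{f_1,f_2\}$ via Tarjan's algorithm in $O(\Delta)$ time. At every internal node of $\dectree_\Delta$ we retain the same auxiliary data structures as in Theorem~\ref{thm:dectree}: a $1$-FT-SC oracle for $G[V_t]$ together with the Choudhary-style family of $2$-FT-SSR oracles, contributing $O(|V_t|)$ space per internal node.

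A query $(x,y,\{f_1,f_2\})$ descends $\dectree_\Delta$ from the root exactly as in the proof of Theorem~\ref{thm:dectree}: at each internal node the auxiliary oracles decide in $O(1)$ time either the answer or the unique child SCC that still contains the relevant query vertices, and if the descent reaches a leaf we pay an additional $O(\Delta)$ for the explicit recomputation. Because the internal-node data at any fixed level of $\dectree_\Delta$ concerns vertex-disjoint SCCs, the total space used at that level is $O(n)$, summing to $O(n\,h_\Delta)$ over the whole tree, while each query costs $O(h_\Delta+\Delta)$, where $h_\Delta$ is the height of the partial tree.

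It remains to show that the vertex-removal order can be chosen so that $h_\Delta=O(m/\Delta)$. The natural plan is to ensure that, at every internal node $t$ with $|E(S_t)|>\Delta$, the removal step prunes at least $\Omega(\Delta)$ edges from the SCC containing any surviving vertex; telescoping these losses along any root-to-leaf path of $\dectree_\Delta$ would bound the total loss by $|E|=m$, forcing $h_\Delta=O(m/\Delta)$. Combined with the accounting above, this immediately yields the claimed $O(mn/\Delta)$ space and $O(m/\Delta+\Delta)$ query time.

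The hard part is establishing the edge-pruning guarantee in the sparse regime, where every vertex of $S_t$ has degree below $\Delta$ but $|E(S_t)|>\Delta$, since no single vertex removal necessarily loses $\Omega(\Delta)$ edges. The most natural fix is to allow ``compound levels'' of $\dectree_\Delta$ that group a short chain of low-degree removals into a single step whose cumulative edge loss reaches $\Omega(\Delta)$, and to extend the Choudhary-style $2$-FT-SSR oracles so that they still locate the correct child SCC in $O(1)$ time across a batched removal. Verifying that this extension preserves the $O(|V_t|)$ per-node storage and $O(1)$ per-level descent time is the main bookkeeping step before the bounds of the theorem fall out.
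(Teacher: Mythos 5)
Your proposal correctly mirrors the overall shape of the paper's construction (halt the SCC-tree early, pay $O(\Delta)$ at the leaves, pay $O(h_\Delta)$ descending), but it differs from the paper in the one place that actually carries the height bound, and that is where it has a genuine gap.

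Your stopping rule is ``halt when $|E(S_t)|\leq\Delta$,'' and you try to bound the height by arguing that each removal step destroys $\Omega(\Delta)$ edges from the surviving SCC. As you yourself note, this fails in the sparse regime: a single vertex removal from a bounded-degree graph destroys only $O(1)$ edges, and the resulting subgraph can remain strongly connected with almost all of its edges. Worse, there are graphs that \emph{never} get below $\Delta$ edges along any removal sequence of reasonable length: a $3$-vertex-connected cubic digraph with $m=\Theta(n)$ edges stays strongly connected after any two removals, so your decomposition would recurse for $\Omega(n)$ levels even though every $2$-FT-SC query on it is trivially \textsc{true}. Your proposed ``compound levels'' fix is not worked out and does not obviously preserve the $O(1)$-per-level descent or the $O(|V_t|)$-per-node storage; you would have to show how a batched removal of several low-degree vertices interacts with the $2$-FT-SSR machinery, and that is precisely the hard content you are waving at.

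The paper avoids this entirely with a different stopping criterion, the notion of a \emph{$\Delta$-good} graph: for every separation pair $\{f_1,f_2\}$, the graph $G-\{f_1,f_2\}$ has at most one large SCC, and every vertex outside it has either a small in-reachability or a small out-reachability region. Lemma~\ref{lemma:Delta-good} shows that on such a graph any $2$-FT-SC query can still be answered in $O(\Delta)$ time by four truncated searches, \emph{even though the graph itself may have far more than $\Delta$ edges}. This lets the decomposition terminate much earlier than your ``$\leq\Delta$ edges'' rule. The paper then removes a \emph{separation pair} (two split vertices) per recursion step, not a single vertex, and Lemma~\ref{lemma:case5} shows that if the current subgraph is neither small, nor $3$-vertex-connected, nor $\Delta$-good, nor has a ``good'' separation pair, then some separation pair exists whose removal leaves every SCC with fewer than $m-\Delta$ edges. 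That pairwise removal plus the $\Delta$-good termination is what actually gives the $O(m/\Delta)$ height; your single-vertex pruning argument cannot give it, and without something like the $\Delta$-good notion the proof does not close.
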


Theorem~\ref{thm:refined} provides a trade-off between space and query time. To minimize the query time, we set $\Delta = \sqrt{m}$ which gives the following result.

\begin{corollary}
\label{corollary:refined}
Let $G=(V,E)$ be a digraph on $n$ vertices and $m$ edges.
There is a polynomial-time algorithm that computes a $2$-FT-SC oracle of $O(n\sqrt{m})$ size
with $O(\sqrt{m})$ query time.
\end{corollary}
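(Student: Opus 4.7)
The plan is to obtain Corollary~\ref{corollary:refined} as a direct instantiation of Theorem~\ref{thm:refined}, choosing the parameter $\Delta$ to balance the two summands that appear in the query time. Theorem~\ref{thm:refined} permits any integer $\Delta \in \{1,\dots,m\}$ and produces an oracle of size $O(mn/\Delta)$ with query time $O(m/\Delta + \Delta)$; in particular, $\Delta = \lfloor\sqrt{m}\rfloor$ is a legal choice (adopting the convention $\sqrt{m}\ge 1$ when $m\ge 1$; the degenerate case $m=0$ is trivial since then $G$ has no edges and strong connectivity reduces to equality of vertices).

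First I would substitute $\Delta = \sqrt{m}$ into the query-time bound, observing that $m/\Delta + \Delta = \sqrt{m} + \sqrt{m} = O(\sqrt{m})$. This is the standard arithmetic-geometric balance: the function $\Delta \mapsto m/\Delta + \Delta$ is minimized (up to constants) precisely at $\Delta = \sqrt{m}$, so no smaller query-time bound can be extracted from the statement of Theorem~\ref{thm:refined} by a different choice of $\Delta$.

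Next I would substitute the same value into the space bound, obtaining $O(mn/\sqrt{m}) = O(n\sqrt{m})$. Combining the two bounds yields exactly the claimed trade-off point. Since the algorithm of Theorem~\ref{thm:refined} runs in polynomial time for every fixed $\Delta \in \{1,\dots,m\}$, the resulting construction for $\Delta=\sqrt{m}$ is likewise polynomial, completing the corollary.

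There is no serious obstacle here; the statement is a parameter-setting corollary of the preceding theorem, and the only thing worth noting is that $\sqrt{m}$ is indeed the minimizer of the query-time expression, which is why it is singled out from the one-parameter family guaranteed by Theorem~\ref{thm:refined}.
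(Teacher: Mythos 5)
Your proposal is correct and follows exactly the paper's route: substitute $\Delta = \sqrt{m}$ (or its integer floor) into Theorem~\ref{thm:refined} and simplify both the space bound $O(mn/\Delta)$ to $O(n\sqrt{m})$ and the query bound $O(m/\Delta + \Delta)$ to $O(\sqrt{m})$. The extra remarks about $\sqrt{m}$ minimizing $m/\Delta + \Delta$ and the degenerate $m=0$ case are harmless additions but not needed; the paper states the same parameter choice without further elaboration.
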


Thus, when $m=o(n^2)$, the oracle of Corollary~\ref{corollary:refined} achieves $o(n^2)$ space and $o(n)$ query time. Furthermore, for sparse graphs, where $m=O(n)$, we have an oracle of $O(n^{3/2})$ space and $O(\sqrt{n})$ query time.

\ignore{
We note that \cite{chakraborty_et_al:LIPIcs.ICALP.2022.35} provides a lower bound of $\Omega(n\sqrt{|\mathcal{P}|})$ bits \note{Loukas: This is size in bits.} on the size of a $2$-FT-R oracle for answering reachability queries under two vertex failures for pairs of vertices in a set $\mathcal{P}\subseteq V\times V$ (irrespective of the query time).\note{Loukas: Here $P$ is a set of ``terminal'' vertex pairs, for which we wish to answer connectivity queries, and not a set of possible pairs of failures.} Notice that this is $\Omega(n^2)$ if $\mathcal{P}=V\times V$, i.e., if we want to be able to answer any reachability query under a pair of failures. Thus, Corollary~\ref{corollary:refined} implies that there is a sharp distinction between the optimal complexity of $2$-FT-R oracles and that of $2$-FT-SC oracles. }

For the experimental part of our work, we provide efficient implementations of our data structures and evaluate their empirical performance by conducting an extensive experimental study on graphs taken from real-world applications. Furthermore, we provide some simple BFS-based heuristics, that can be constructed very efficiently, they perform remarkably well in practice, and serve as the baseline for our experiments.
%

\section{Preliminaries}
\label{sec:preliminaries}

Let $G=(V,E)$ be a digraph.
For any subgraph $H$ of $G$, we denote by $V(H) \subseteq V$ the vertex set of $H$, and by $E(H) \subseteq E$ the edge set of $H$.
For $S\subseteq V,$ we denote by $G[S]$ the subgraph of $G$ induced by the vertices in $S$, and $G-S$ denotes the subgraph of $G$ that remains after removing the vertices from $S$. (Thus, we have $G-S=G[V\setminus S]$.)
Given a path $P$ in $G$ and two vertices $u,v\in V(P),$ we denote by $P[u,v]$ the subpath of $P$ starting from $u$ and ending at $v$.
If $P$ starts from $s$ and ends at $t$ we say that $P$ is an $s \to t$ path.
Two vertices $u,v\in V$ are \emph{strongly connected} in $G$, denoted by $u \scon v$, if there exist a $u \to v$ path and a $v \to u$ path in $G.$
The \emph{strongly connected components} (SCCs) of $G$ are its maximal strongly connected subgraphs.
Thus, two vertices $u,v \in V$  are strongly connected if and only if they belong to the same strongly connected component of $G$.
The \emph{size} of a strongly connected component is given by the number of its edges.
It is easy to see that the SCCs of $G$ partition its vertex set.
The \emph{reverse digraph} of $G$, denoted by $G^R$, is obtained from $G$ by reversing the direction of all edges.
The predecessors (resp., successors) of a vertex $v$ in $G$, denoted by $\predecessors{G}{v}$ (resp., $\successors{G}{v}$), is the set of vertices that reach $v$ (resp., are reachable from $v$) in $G$.

A vertex of $G$ is a \emph{strong articulation point} (SAP) if its removal increases the number of strongly connected components.
A strongly connected digraph $G$ is \emph{$2$-vertex-connected} if it has at least three vertices and no strong articulation points.
Similarly, two vertices $f_1,f_2 \in V$ form a \emph{separation pair} if their removal increases the number of strongly connected components.
A strongly connected digraph $G$ is \emph{$3$-vertex-connected} if it has at least four vertices and no separation pairs.
Note that a SAP of $G$ forms a separation pair with any other vertex, and so we make the following distinction.
We say that a separation pair $\{f_1,f_2\}$ is \emph{proper} if $f_2$ is a SAP of $G-f_1$ and $f_1$ is a SAP of $G-f_2$.

\paragraph*{Auxiliary data structures.}
Consider a digraph $G$ with $n$ vertices, and let $s$ be a designated start vertex.
Our oracles make use of the following auxiliary data structures for $G$. 

\begin{description}

\item[$2$-FT-SSR-O.] From \cite{choudhary:ICALP16}, there is an $O(mn)$-time algorithm that computes a dual-fault-tolerant single-source reachability oracle ($2$-FT-SSR-O) of $O(n)$ size that answers in $O(1)$ time reachability queries of the form ``{\em is vertex $v$ reachable from $s$ in $G-\{f_1,f_2\}$?}'', where the vertex $v\in V(G)$ and the failed vertices $f_1,f_2\in V(G)$ are part of the query.
    We denote by $\ftreach{s}{v}{f_1}{f_2}$ the answer to such a query.
    Moreover, we use a similar $2$-FT-SSR oracle for $G^R$, i.e., an oracle of $O(n)$ size that answers in $O(1)$ time reachability queries of the form ``{\em is vertex $s$ reachable from $v$ in $G-\{f_1,f_2\}$?}\,''.
    We denote by $\rftreach{s}{v}{f_1}{f_2}$ the answer to such a query.
    %

\item[$1$-FT-SC-O.] From \cite{GIP20:SICOMP}, there is a linear-time algorithm that computes a single-fault-tolerant strong-connectivity oracle ($1$-FT-SC-O) of $O(n)$ size that answers in $O(1)$ time queries of the form ``{\em are vertices $x$ and $y$ strongly connected in $G-f$?}\,'', where the vertices $x,y\in V(G)$ and the failed vertex $f \in V(G)$ are part of the query.
    We denote by $\ftsc{x}{y}{f}$ the answer to such a query.
\end{description}

We state a simple fact that will be useful in our query algorithms.

\begin{observation}
\label{obervation:sc}
For any two vertices $x,y \in V(G)-s$, we have $x \scon y$ in $G-\{f_1,f_2\}$ only if $\ftreach{s}{x}{f_1}{f_2}= \ftreach{s}{y}{f_1}{f_2}$ and $\rftreach{s}{x}{f_1}{f_2}=\rftreach{s}{y}{f_1}{f_2}$.
\end{observation}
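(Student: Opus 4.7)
The plan is to prove the stated implication by a direct, elementary argument that exploits transitivity of reachability inside $G-\{f_1,f_2\}$. First, I would unpack the hypothesis $x \scon y$ in $G-\{f_1,f_2\}$ to obtain an $x \to y$ path $P_1$ and a $y \to x$ path $P_2$, both lying entirely in $G-\{f_1,f_2\}$.

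To establish $\ftreach{s}{x}{f_1}{f_2} = \ftreach{s}{y}{f_1}{f_2}$, I would show that each side implies the other. If $s$ reaches $x$ in $G-\{f_1,f_2\}$ via some path $Q$, then concatenating $Q$ with $P_1$ yields an $s \to y$ walk in $G-\{f_1,f_2\}$, which contains an $s \to y$ path; hence $s$ reaches $y$ in the same subgraph. The symmetric direction uses $P_2$ in place of $P_1$, giving the desired equality of truth values.

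For the equality $\rftreach{s}{x}{f_1}{f_2}=\rftreach{s}{y}{f_1}{f_2}$, I would re-run the same argument in the reverse digraph $G^R$, observing that $x \scon y$ in $G-\{f_1,f_2\}$ if and only if $x \scon y$ in $G^R-\{f_1,f_2\}$, and that $\rftreach{s}{\cdot}{f_1}{f_2}$ is precisely ordinary reachability from the query vertex to $s$ in $G-\{f_1,f_2\}$ (equivalently, from $s$ to the query vertex in $G^R-\{f_1,f_2\}$). Applying the reachability-extension argument using $P_2$ and $P_1$ respectively then closes this case.

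There is no real obstacle here; the statement is an immediate consequence of the definition of strong connectivity together with the fact that a walk in a directed graph contains a path with the same endpoints. The only subtlety worth flagging is that the implication is one-directional: two vertices can agree on both reachability labels with respect to $s$ without being strongly connected in $G-\{f_1,f_2\}$ (for instance, both could be simultaneously unreachable from $s$ and unable to reach $s$), so the observation will serve only as a cheap necessary filter inside the subsequent query algorithms, not as a standalone characterization of strong connectivity under failures.
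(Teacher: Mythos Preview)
Your argument is correct and is exactly the natural proof; the paper in fact states this observation without proof, treating it as a self-evident consequence of the definitions, so there is nothing further to compare.
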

\section{Decomposition tree}
\label{sec:dectree}

We construct an SCC-tree $\dectree$ of a strongly connected digraph $G=(V,E)$ based on the idea introduced by \L{}\k{a}cki~\cite{scc-decomposition}.
(If the input digraph $G$ is not strongly connected, then we construct a separate SCC-tree for each strongly connected component.)
Each node $N(t)$ of $\dectree$ corresponds to a vertex $t$ of $G$, referred to as the \emph{split vertex} of $N(t)$.
Also, a node $N(t)$ of $\dectree$ is associated with a subset of vertices $S_{t}\subseteq V$ that contains $t$.
%
%
An SCC-tree of $G$ is constructed recursively as follows:
\begin{itemize}
    \item We choose a split vertex $r$ of $G$, and let $N(r)$ be the root of $\dectree$. We associate $N(r)$ with $S_{r}=V(G)$.
    \item For a node $N(t) \in \dectree$ such that $|S_t| \ge 2$, let $H_{1}, \ldots, H_{k}$ be the SCCs of $G[S_t]-t$. For every $i\in \{1,\ldots,k\}$ we choose a split vertex $t_{i}\in V(H_{i})$ and make the corresponding node $N(t_i)$ a child of $N(t)$. We set $S_{t_i} = V(H_i)$, and recursively compute an SCC-tree for $G[S_{t_i}]$ rooted at $N(t_i)$.
\end{itemize}

Figure~\ref{fig:scctree-main} gives an example of an SCC-tree.
For every vertex $x\in V$, we define $P_{x}$ to be the set of nodes $N(t)$ in $\dectree$ such that $x \in S_{t}$. Thus, $P_x$ consists of the nodes on the tree-path from $N(r)$ to $N(x)$.
We also let $V_{x}$ be the set of split vertices that correspond to $P_{x}$, i.e., $V_x = \{ t \in V : N(t) \in P_{x}\}$.

\ignore{
\begin{figure*}
 	\centering
  \includegraphics[clip=false, width=0.8\textwidth]{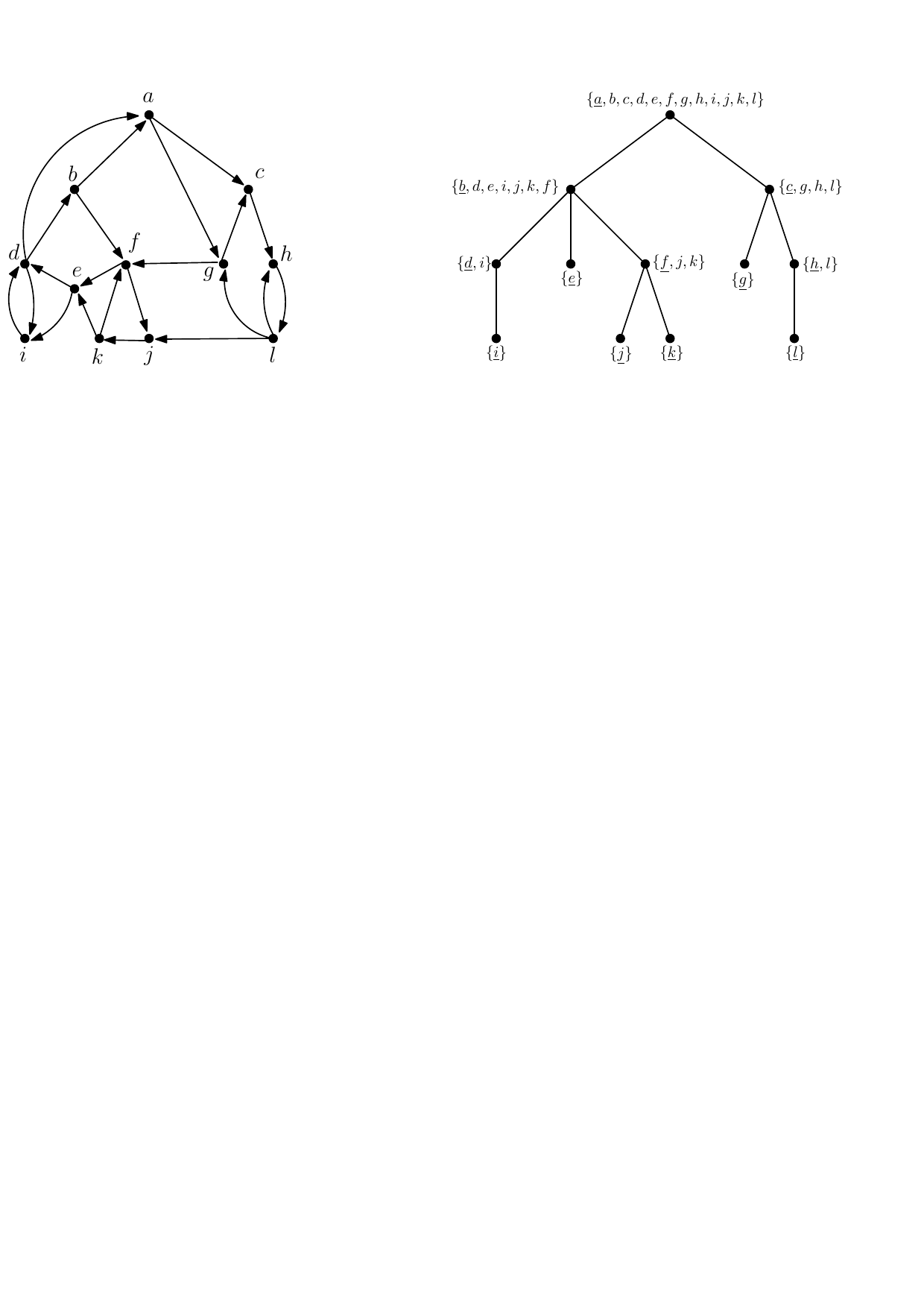}
 	\caption{A strongly connected digraph $G$ (left) and a SCC-tree $\dectree$ of $G$ (right). Every node of $\dectree$ is associated with a subset of $V(G)$ and the underlined vertex is the corresponding split vertex. For example, the left child of the root is $N(b)$ and $S_{b}=\{b,d,e,i,j,k,f\}$. Also note that $P_{l} = \langle N(a),N(c),N(h), N(l) \rangle$.}\label{fig:scctree}
 \end{figure*}
}

Observe that the number of nodes of $\dectree$ is exactly $|V|$ and every $v\in V$ appears exactly once as a split vertex in a node $N(v) \in \dectree$.
Thus there is a one-to-one correspondence between the vertices of $G$ and the nodes of $\dectree$.
For every node $N(t) \in \dectree$, we define $G_{t}$ to be the strongly connected subgraph of $G$ induced by $S_{t}$, i.e., $G_{t}=G[S_{t}]$.
By construction, it follows that for every $x\in V$ the nodes of $P_{x}$ form a path (starting from the root) in $\dectree$.
Thus we can think of $P_{x}$ as an ordered set (where its elements are ordered from the root to the node $N(x)$ of $\dectree$) and we denote by $P_{x}(i)$ its $i$th element (if $|P_{x}|<i$ then $P_{x}(i)=\mathit{null}$).
For $x,y\in V$, we define their {\em nearest common ancestor}, $\mathit{nca}(x,y)$, in $\dectree$ to be the last common element of $P_{x}$ and $P_{y}$.
%

%
At each node $N(t) \in \dectree$, we store the auxiliary data structures of Section~\ref{sec:preliminaries}, which we use to answer a query, as we describe next.
%

\begin{figure*}[h!]
 	\centering
  \includegraphics[clip=false, width=\textwidth]{scc_tree.pdf}
 	\caption{A strongly connected digraph $G$ (left) and an SCC-tree $\dectree$ of $G$ (right). Every node of $\dectree$ is associated with a subset of $V(G)$ and the underlined vertex is the corresponding split vertex. For example, the left child of the root is $N(b)$ and $S_{b}=\{b,d,e,i,j,k,f\}$. Also note that $P_{l} = \langle N(a),N(c),N(h), N(l) \rangle$.}\label{fig:scctree-main}
 \end{figure*}

\paragraph*{Answering a query.} 
Now we describe an algorithm which, given an SCC-tree $\dectree$ of a strongly connected graph $G=(V,E)$, two query vertices $x,y\in V$ and two failed vertices $f_1,f_2\in V,$ answers the query $\twoftsc{x}{y}{f_1}{f_2}$ that asks whether $x$ and $y$ are strongly connected in $G-\{f_1,f_2\}$.
The algorithm begins at the root $N(r)$ of $\dectree$ and descends the path $P_{\mathit{nca}(x,y)} = P_x \cap P_y$.
When we visit a node $N(t)$ we perform the following steps:

\begin{enumerate}
\item If $t$ is a failed vertex, say $t=f_1$, then we check if $N(t)=\mathit{nca}(x,y)$ (i.e., if $N(x)$ and $N(y)$ are descendants of different children of $N(t)$). If this is the case then we return \textsc{false}. Otherwise, we return the result of the query $\ftsc{x}{y}{f_2}$ for $G_w$, where $N(w)$ is the child of $N(t)$ that is an ancestor of both $N(x)$ and $N(y)$. ($N(w)$ is the next node on $P_{\mathit{nca}(x,y)}$.)

\item If $t$ is not a failed vertex, then we test the condition (C): $( \ftreach{t}{x}{f_1}{f_2} \not= \ftreach{t}{y}{f_1}{f_2} ) \vee ( \rftreach{t}{x}{f_1}{f_2} \not= \rftreach{t}{y}{f_1}{f_2} )$ in $G_t$. If it is true, then we return \textsc{false}.

\item If (C) is false and both $\ftreach{t}{x}{f_1}{f_2}$ and $\rftreach{t}{x}{f_1}{f_2}$ are true in $G_t$, then we return \textsc{true}. Otherwise, we proceed to the next node on $P_{\mathit{nca}(x,y)}$.
\end{enumerate}





\begin{lemma}
The query algorithm is correct.
\end{lemma}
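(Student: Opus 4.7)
The plan is to prove, by induction on the depth of the node currently visited along the path from $N(r)$ to $N(\mathit{nca}(x,y))$, the invariant that at $N(t)$ we have $x \scon y$ in $G-\{f_1,f_2\}$ if and only if $x \scon y$ in $G_t-\{f_1,f_2\}$. The base case is $N(t)=N(r)$: after reducing to the SCC-tree of the SCC that contains both $x$ and $y$ (if they lie in different SCCs of $G$, the query is trivially \textsc{false}), $G_r$ is exactly this SCC and any walk between $x$ and $y$ in $G$ stays inside it, so the invariant holds.

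For the inductive step I would verify each branch of the algorithm. When $t$ is failed and $N(t)=\mathit{nca}(x,y)$, the vertices $x$ and $y$ lie in distinct SCCs of $G_t-t$, and deleting the remaining failure $f_2$ cannot create new strong connections, so \textsc{false} is correct. When $t$ is failed but $N(t)\neq\mathit{nca}(x,y)$, both $x$ and $y$ lie in the child SCC $S_w$; because the condensation of $G_t-t$ is a DAG, no walk between two vertices of one SCC may leave and re-enter that SCC, so $x \scon y$ in $G_t-\{t,f_2\}$ is equivalent to $x \scon y$ in $G_w-\{f_2\}$, which is precisely what $\ftsc{x}{y}{f_2}$ reports via the stored $1$-FT-SC-O of $G_w$.

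When $t$ is not failed, I would split on condition (C). If (C) holds, Observation~\ref{obervation:sc} immediately rules out $x \scon y$ in $G_t-\{f_1,f_2\}$, so \textsc{false} is correct. If (C) fails and both $\ftreach{t}{x}{f_1}{f_2}$ and $\rftreach{t}{x}{f_1}{f_2}$ are \textsc{true}, then by (C) the same is true for $y$, and concatenating the four walks yields $x\to t\to y$ and $y\to t\to x$ in $G_t-\{f_1,f_2\}$, so \textsc{true} is correct. The delicate case is when (C) fails but at least one of the two bits for $x$ (and hence for $y$) is \textsc{false}; here I need to argue that any walk between $x$ and $y$ in $G_t-\{f_1,f_2\}$ must avoid $t$, since a walk through $t$ would simultaneously force both $\rftreach{t}{\cdot}{f_1}{f_2}$ and $\ftreach{t}{\cdot}{f_1}{f_2}$ to be \textsc{true}. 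Consequently, $x \scon y$ in $G_t-\{f_1,f_2\}$ coincides with $x \scon y$ in $G_t-\{t,f_1,f_2\}$, which by the condensation-acyclicity argument equals $x \scon y$ in $G_w-\{f_1,f_2\}$ when $N(t)\neq\mathit{nca}(x,y)$ (so the invariant passes to the child $N(w)$), and equals \textsc{false} when $N(t)=\mathit{nca}(x,y)$ because $x$ and $y$ lie in different SCCs of $G_t-t$ (the algorithm's path is then exhausted without returning \textsc{true}, which we interpret as \textsc{false}).

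The main obstacle is the last branch of the non-failed case: carefully ruling out walks through $t$ based on the precise combination of reachability bits, and confirming that the reduction to $G_w$ preserves the query's answer. Once this bookkeeping is justified, the inductive invariant carries all the way to $N(\mathit{nca}(x,y))$ and the correctness of every return value follows.
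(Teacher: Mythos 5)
Your proposal is correct, and it takes a genuinely different route from the paper's. The paper splits the argument into two directions: soundness (every \textsc{true} return implies $x\scon y$ in $G-\{f_1,f_2\}$, checked per branch) and completeness (if $x\scon y$ in $G-\{f_1,f_2\}$ with witnessing SCC $C$, then $C$ stays contained in each $S_{t_i}$ along the path until some split vertex $t_k\in C$ is reached, at which point the algorithm must answer \textsc{true}). You instead carry a single inductive invariant — at node $N(t)$, $x\scon y$ in $G-\{f_1,f_2\}$ iff $x\scon y$ in $G_t-\{f_1,f_2\}$ — and verify that every \textsc{true}, every \textsc{false}, and every descent is exactly consistent with it. The two key ingredients you rely on are sound: (i) if (C) is false and the bits are not all true, any $x\text{--}y$ walk through $t$ in $G_t-\{f_1,f_2\}$ would force both a walk to $t$ and a walk from $t$ for one of the query vertices, hence (by (C)) for both, contradicting the bit pattern, so all such walks avoid $t$; and (ii) since $G_w$ is an SCC of $G_t-t$ and the condensation is acyclic, walks between $x$ and $y$ that avoid $t$ stay inside $G_w$, so strong connectivity in $G_t-\{t,f_1,f_2\}$ coincides with strong connectivity in $G_w-\{f_1,f_2\}$. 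Your approach is somewhat more local and explicitly certifies the \textsc{false} branches (including the implicit \textsc{false} when the path is exhausted at $\mathit{nca}(x,y)$), whereas the paper obtains the correctness of the negative answers only indirectly via the completeness direction. Conversely, the paper's SCC-tracking argument is shorter and avoids the case analysis on which reachability bits are false. Both are valid; yours is a reasonable and slightly more fine-grained alternative.
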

\begin{proof}

Consider a query $\twoftsc{x}{y}{f_1}{f_2}$ that asks if $x$ and $y$ are strongly connected in $G-\{f_1,f_2\}$.
We first argue that if our procedure returns \textsc{true}, then $x$ and $y$ are strongly connected in $G-\{f_1,f_2\}$.
This happens in one of the following cases:
\begin{itemize}
\item One of the failing vertices, say $f_1$, is the split vertex $t$ in the currently visited node $N(t)$ of $\dectree$. Let $N(w)$ be the child of $N(t)$ containing $x$ and $y$. (This node exists because otherwise, the query algorithm would return \textsc{false}.) We have two cases:

\begin{itemize}
\item  The other failing vertex, $f_2 \in S_w$. Then, we return the answer $\ftsc{x}{y}{f_2}$ for $G_w$, which is \textsc{true} if and only if $x \scon y$ in $G_t - \{f_1,f_2\}$. Hence, $x$ and $y$ are strongly connected in $G-\{f_1,f_2\}$.

\item Vertex $f_2 \notin S_w$. Then, we return \textsc{true} since $x$ and $y$ are in the same SCC of $G_t-t$, induced by the vertices of $S_w$, which does not contain any failed vertices.
\end{itemize}

\item The split vertex $t$ of $N(t)$ is not a failed vertex. The query algorithm returns \textsc{true} when we have $\ftreach{t}{x}{f_1}{f_2} = \ftreach{t}{y}{f_1}{f_2}$ and $\rftreach{t}{x}{f_1}{f_2} = \rftreach{t}{y}{f_1}{f_2}$, and also both $\ftreach{t}{x}{f_1}{f_2}$ and $\rftreach{t}{x}{f_1}{f_2}$ are true in $G_t$. Then, we have that both $x$ and $y$ are strongly connected with $t$ in $G_{t}-\{f_1,f_2\}$. Thus, $x$ and $y$ are strongly connected in $G-\{f_1,f_2\}$.
\end{itemize}

For the opposite direction, suppose that $x$ and $y$ are strongly connected in $G-\{f_1,f_2\}$. Let $C$ be the SCC of $G-\{f_1,f_2\}$ that contains both $x$ and $y$. We 
argue that our procedure will return a positive answer.
The vertices of $C$ are contained in $S_{r}$, where $r$ is the split vertex of the root node $N(r)$ of the SCC-tree $\dectree$ (we recall that $S_{r}=V(G)$), meaning that $P_{x}(1)=P_{y}(1)$. Let $k$ be the positive integer such that $\mathit{nca}(x,y)=P_{x}(k)=P_{y}(k)$.
Assume that the query algorithm has visited the first $i \le k$ nodes of $P_{\mathit{nca}(x,y)}$ without returning a positive answer.
Then, $C$ does not contain any of the split vertices of the first $i$ nodes of $P_{\mathit{nca}(x,y)}$.
Hence, if no positive answer has been returned until step $k-1$, then for every $i\in\{1,\ldots k-1\}$, $C$ was entirely contained in all sets $S_{t_{i}},$ where $t_{i}$ is the split vertex of $N(t_i)=P_{x}(i)=P_{y}(i)$. By the definition of $k$, the next node $t=t_k$ considered by the algorithm has at least two distinct children that contain $x$ and $y$, respectively. This implies that $t$ is a vertex of $C$. Then, we have $\ftreach{t}{x}{f_1}{f_2} = \ftreach{t}{y}{f_1}{f_2}$ and $\rftreach{t}{x}{f_1}{f_2} = \rftreach{t}{y}{f_1}{f_2}$, and also both $\ftreach{t}{x}{f_1}{f_2}$ and $\rftreach{t}{x}{f_1}{f_2}$ are true in $G_t$. So the algorithm returns \textsc{true}.
\end{proof}


\ignore{
\begin{lemma}\label{lemma:nextlvl}
Let $G=(V,E)$ be a digraph and let $x,y,s\in V$ such that $x \scon y$ and $x \not\scon s$ in $G$. Then $x \scon y$ in $G-s$.
\end{lemma}
\begin{proof}
Suppose, for the sake of contradiction, that $u \not\scon v$ in $G-s$. As $u \scon v$ in $G,$ it must be that there is no $u\to v$ path avoiding $s$ or $v\to u$ path avoiding $s$ (or both) in $G$. Without loss of generality, we assume that every $u \to v$ path in $G$ contains $s$ and let $P$ be such a path (it exists because $u \scon v$ in $G$).
Let also $Q$ be a $v\to u$ path in $G$ (it exists because $u \scon v$ in $G$). Then, $P[u,s]$ is a $u\to s$ path in $G$ and $P[s,v]\cup Q$ is a $s \to u$ path in $G.$ Therefore, $u \scon s$ in $G,$ a contradiction.
\end{proof}
}


\ignore{
\begin{lemma}\label{lemma:ftssreach}
Let $G=(V,E)$ be digraph with a designated start vertex $s$, and let $x,y\in V-s$.  Then, 
$x \scon y$ in $G-\{f_1,f_2\}$ only if $\ftreach{s}{x}{f_1}{f_2} = \ftreach{s}{y}{f_1}{f_2}$ and $\rftreach{s}{x}{f_1}{f_2} = \rftreach{s}{y}{f_1}{f_2}$.
\end{lemma}
}

\ignore{
First, we consider the correctness of the query algorithm.
Consider a query $\twoftsc{x}{y}{f_1}{f_2}$ that asks if $x$ and $y$ are strongly connected in $G-\{f_1,f_2\}$.
We first argue that if our procedure returns \textsc{true}, then $x$ and $y$ are strongly connected in $G-\{f_1,f_2\}$.
This happens in one of the following cases:
\begin{itemize}
\item One of the failing vertices, say $f_1$, is the split vertex $t$ in the currently visited node $N(t)$ of $\dectree$. Let $N(w)$ be the child of $N(t)$ containing $x$ and $y$. (This node exists, because otherwise the query algorithm would return \textsc{false}.) We have two cases:

\begin{itemize}
\item  The other failing vertex, $f_2 \in S_w$. Then, we return the answer $\ftsc{x}{y}{f_2}$ for $G_w$, which is \textsc{true} if and only if $x \scon y$ in $G_t - \{f_1,f_2\}$. Hence, $x$ and $y$ are strongly connected in $G-\{f_1,f_2\}$.

\item Vertex $f_2 \notin S_w$. Then, we return \textsc{true} since $x$ and $y$ are in the same SCC of $G_t-t$, induced by the vertices of $S_w$, which does not contain any failed vertices.
\end{itemize}

\item The split vertex $t$ of $N(t)$ is not a failed vertex. The query algorithm returns \textsc{true} when we have $\ftreach{t}{x}{f_1}{f_2} = \ftreach{t}{y}{f_1}{f_2}$ and $\rftreach{t}{x}{f_1}{f_2} = \rftreach{t}{y}{f_1}{f_2}$, and also both $\ftreach{t}{x}{f_1}{f_2}$ and $\rftreach{t}{x}{f_1}{f_2}$ are true in $G_t$. Then, by Observation~\ref{obervation:sc}, $x$ and $y$ are both strongly connected with $t$ in $G_{t}-\{f_1,f_2\}$. Thus, $x$ and $y$ are strongly connected in $G-\{f_1,f_2\}$.
\end{itemize}

For the opposite direction, suppose that $x$ and $y$ are strongly connected in $G-\{f_1,f_2\}$. Let $C$ be the SCC of $G-\{f_1,f_2\}$ that contains both $x$ and $y$. We 
argue that our procedure will return a positive answer.
The vertices of $C$ are contained in $S_{r}$, where $r$ is the split vertex of the root node $N(r)$ of the SCC-tree $\dectree$ (we remind that $S_{r}=V(G)$), meaning that $P_{x}(1)=P_{y}(1)$. Let $k$ be the positive integer such that $\mathit{nca}(x,y)=P_{x}(k)=P_{y}(k)$.
Assume that the query algorithm has visited the first $i \le k$ nodes of $P_{\mathit{nca}(x,y)}$ without returning a positive answer.
Then, $C$ does not contain any of the split vertices of the first $i$ nodes of $P_{\mathit{nca}(x,y)}$.
Hence, if no positive answer has been returned until step $k-1$, then for every $i\in\{1,\ldots k-1\}$, $C$ was entirely contained in all sets $S_{t_{i}},$ where $t_{i}$ is the split vertex of $N(t_i)=P_{x}(i)=P_{y}(i)$. By the definition of $k$, the next node $t=t_k$ considered by the algorithm has at two distinct children that contain $x$ and $y$, respectively. This implies that $t$ is a vertex of $C$. Then, we have $\ftreach{t}{x}{f_1}{f_2} = \ftreach{t}{y}{f_1}{f_2}$ and $\rftreach{t}{x}{f_1}{f_2} = \rftreach{t}{y}{f_1}{f_2}$, and also both $\ftreach{t}{x}{f_1}{f_2}$ and $\rftreach{t}{x}{f_1}{f_2}$ are true in $G_t$. So the algorithm returns \textsc{true}.
}


\paragraph*{Space and running time.} Regarding the running time of the query, we observe that the query algorithm makes at most $O(h)$ queries to the auxiliary data structures, where $h$ is the height of the SCC-tree $\dectree$. As each auxiliary structure has constant query time,  the oracle provides the answer in $O(h)$ time.
Regarding space, note that at each node $N(t) \in \dectree$, we store the auxiliary data structures of Section~\ref{sec:preliminaries}, which require $O(|S_t|)$ space.
Hence, our oracle occupies $\sum_{t \in V(G)} O(|S_t|) = O(nh)$ space.
This concludes the proof of Theorem~\ref{thm:dectree}.




\paragraph*{Implementation details.}
The oracle of Choudhary~\cite{choudhary:ICALP16} computes detour paths with respect to two divergent spanning trees~\cite{DomCert:TALG} $T_1$ and $T_2$ of $G$.
The spanning trees $T_1$ and $T_2$ are rooted at $s$ and have the property that for any vertex $v \not=s$, the only common vertices on the two tree $s$-$v$ paths  are the ones included on all $s$-$v$ paths in $G$.
Moreover, $T_1$ and $T_2$ can be computed in $O(m)$ time~\cite{DomCert:TALG}.
To answer a query $\ftreach{s}{x}{f_1}{f_2}$, \cite{choudhary:ICALP16} uses a data structure for reporting minima on tree paths~\cite{CartesianTrees:Demaine}.
Specifically, Demaine et al.~\cite{CartesianTrees:Demaine} show that a tree $T$ on $n$ vertices and edge weights can be preprocessed in $O(n \log{n})$ time to build a data structure of $O(n)$ size so that given any $u, v \in T$, the edge of smallest weight on the tree path from $u$ to $v$ can be reported in $O(1)$ time.

\subsection{Special graph classes}
\label{sec:special_classes}

The space and query time of the oracle depends on the value of the parameter $h$ (the height of the SCC-tree), which can be $O(n)$.
 For restricted graph classes, we can choose the split vertices in a way that guarantees better bounds for $h$. Such classes are planar graphs and bounded treewidth graphs.
 
\begin{definition}
\label{def:vertex separator}
A {\em vertex separator} of an undirected graph $G=(V,E)$ is a subset of vertices, whose removal decomposes the graph into components of size at most $\alpha|V|,$ for some constant $0<\alpha<1.$ A family of graphs $\cal{F}$ is called $f(n)$-separable if
\begin{itemize}
\item for every $F\in \cal{F},$ and every subgraph $H\subseteq F,$ $H\in \cal{F},$
\item for every $F\in \cal{F},$ such that $n=|V(F)|,$ $F$ has a vertex separator of size $f(n).$
\end{itemize}
\end{definition}

\begin{lemma}[\cite{scc-decomposition}]
\label{thm:decomp}
 Let $G=(V,E)$ be a directed strongly connected graph, such that $G\in \cal{F}$ is $Cn^{s}$-separable ($s\geq 0$). Moreover, assume that the separators for every graph $\cal{F}$ can be found in linear time. Then, we can build an SCC-decomposition tree for $G$ of height $O(h(n))$ in $O(|E|h(n))$ time, where $h(n)=O(n^{s})$ for $s>0$ and $h(n)=O(\log n)$ for $s=0.$
\end{lemma}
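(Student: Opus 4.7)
The plan is to use the separator structure to greedily select the split vertices at the top of the SCC-tree so that, once all separator vertices have been removed along any root-to-leaf path, the remaining subproblems are on graphs of at most $\alpha n$ vertices. Given a strongly connected subgraph $H$ on $n'$ vertices, I would compute an (undirected) separator $S$ of size $|S|\leq C(n')^{s}$ in linear time, then pick an arbitrary $s_1\in S$ as the split vertex at the root of the SCC-subtree for $H$ and compute the SCCs of $H-s_1$. For each resulting SCC that still contains unused separator vertices, I would continue the construction by picking such a vertex as the next split vertex, and so on; for each resulting SCC that contains no vertex of $S$, I would recurse from scratch (find a new separator).

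The key observation that drives the size bound is the following: any SCC $C$ of $H-F$, where $F\subseteq S$ is the set of separator vertices used as split vertices on the path from the root of the current subtree, that avoids $S\setminus F$ lies entirely within a single undirected component of $H-S$, and hence $|C|\le \alpha n'$. Indeed, for $u,v\in C$ any $u\to v$ path inside $H[C]\subseteq H-F$ stays within $C$ (because intermediate vertices would also belong to the SCC of $u$ and $v$), and since $C\cap S=\emptyset$ this path avoids all of $S$, so $u,v$ lie in the same undirected component of $H-S$. Consequently, along any root-to-leaf path of the top portion of the SCC-subtree for $H$, at most $|S|\le C(n')^s$ split vertices are taken from $S$ before the recursion descends into a graph of size at most $\alpha n'$.

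Combining these, the height $h(n)$ satisfies
\[
h(n)\ \le\ Cn^{s}+h(\alpha n).
\]
For $s>0$, since $\alpha^{s}<1$, unfolding gives the geometric sum $h(n)\le Cn^{s}\sum_{i\ge 0}\alpha^{is}=O(n^{s})$; for $s=0$, $f(n)=C$ is constant and the recurrence yields $h(n)=O(\log n)$.

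For the running time, the work performed at each node $N$ of the SCC-tree (find a separator of $G_N$ in linear time, and compute the SCCs of $G_N$ after removing its split vertex) is $O(|E(G_N)|)$. Summing over all nodes,
\[
\sum_{N\in\mathcal{T}}|E(G_N)|\ =\ \sum_{(u,v)\in E}\bigl|P_{\mathit{nca}(u,v)}\bigr|\ \le\ |E|\cdot(h(n)+1),
\]
because an edge $(u,v)$ lies in $G[S_N]$ precisely when $N$ is on the root-to-$\mathit{nca}(u,v)$ path of $\mathcal{T}$. This yields the claimed $O(|E|\,h(n))$ total construction time. The main obstacle I expect is the bookkeeping when removing some $s_i$ splits the current SCC into several SCCs that each still contain separator vertices: one must argue that these branches do not stack additively in the height recurrence. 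The point is that they are simultaneous, disjoint subproblems of $H-\{s_1,\dots,s_i\}$, each with strictly fewer unused separator vertices, so along any single root-to-leaf path only $|S|$ separator vertices can be used as split vertices before the $\alpha n'$-regime kicks in.
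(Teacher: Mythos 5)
The paper does not include a proof of this lemma --- it is stated by citation to \L{}\k{a}cki's work \cite{scc-decomposition} --- so there is no in-paper proof to compare against, but your reconstruction is correct and follows the same separator-driven strategy as the cited source. The crucial point is exactly the one you isolate: an SCC of $H-F$ that is disjoint from the separator $S$ is, via any internal directed path, contained in a single undirected component of $H-S$ and hence has at most $\alpha n'$ vertices, which yields the recurrence $h(n)\le Cn^{s}+h(\alpha n)$; the charging of each edge $(u,v)$ to the nodes on the root-to-$\mathit{nca}(N(u),N(v))$ path then correctly bounds $\sum_{N}|E(G_N)|$ by $O(|E|\,h(n))$, giving the stated construction time.
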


We next show the applicability of Lemma~\ref{thm:decomp} by providing efficient $2$-FT-SC oracles on well-known graph classes with structured underlying properties.

\paragraph*{Planar graphs.}
Here we assume that the underlying undirected graph is planar. The following size of separators in planar graphs is well-known.

\begin{theorem}[\cite{planar-separators:LT}] Planar graphs are $\sqrt{8n}$-separable and the separators can be found in linear time.
\end{theorem}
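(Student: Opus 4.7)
The plan is to follow the classical Lipton--Tarjan approach based on breadth-first search layers and fundamental cycles in a planar embedding. First I would pick an arbitrary root $v_0$, build a BFS tree $T$, and partition $V$ into distance-layers $L_0,L_1,\ldots,L_r$. Let $L_m$ be the \emph{median} layer, i.e.\ the one for which both $\sum_{i<m}|L_i|$ and $\sum_{i>m}|L_i|$ are at most $n/2$. If $|L_m|\leq\sqrt{8n}$ then $L_m$ is already a balanced vertex separator and we are done.

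Otherwise, using an averaging argument over consecutive layers, I would locate two layers $L_a$ and $L_b$ with $a<m<b$, $|L_a|,|L_b|\leq \sqrt{2n}$, and $b-a\leq \sqrt{2n}$. Set $A=\bigcup_{i<a}L_i$, $M=\bigcup_{a<i<b}L_i$ and $B=\bigcup_{i>b}L_i$. By the choice of $L_m$ we have $|A|,|B|\leq n/2$. If $|M|\leq 2n/3$, then $L_a\cup L_b$ is already a separator of total size at most $2\sqrt{2n}\leq \sqrt{8n}$ that breaks $G$ into pieces of at most $2n/3$ vertices, so we are done.

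The main obstacle is the remaining case $|M|>2n/3$, which requires the fundamental-cycle argument. I would contract $A\cup L_a$ to a single root vertex $r^{\star}$ in the BFS tree restricted to $A\cup L_a\cup M\cup L_b$ and delete the vertices of $B$; the resulting graph remains planar and inherits a spanning tree of depth at most $b-a\leq\sqrt{2n}$ rooted at $r^{\star}$. After triangulating, each non-tree edge $e=\{u,w\}$ together with the two root-paths in the tree forms a fundamental cycle $C_e$ of length at most $2(b-a)+1$. By the Jordan curve theorem applied to the embedding, $C_e$ splits the plane into an interior and an exterior, partitioning $V\setminus V(C_e)$ into two groups. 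A standard counting over the triangular faces (walking across edges from the heavier side toward the lighter side) produces an edge $e^{\star}$ whose fundamental cycle has both the interior and the exterior containing at most $2n/3$ vertices of $V$. Returning $L_a\cup L_b\cup V(C_{e^{\star}})$ then gives a balanced separator; with careful accounting of the overlap between $V(C_{e^{\star}})$ and the layers, and of the depth actually used, its size can be brought down to $\sqrt{8n}$.

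For the linear-time implementation, BFS, the averaging choice of $L_a,L_b$, planar triangulation, and the preorder scan that locates the balanced fundamental cycle $C_{e^{\star}}$ can each be done in $O(n)$ time on a combinatorial planar embedding (itself computable in linear time by Hopcroft--Tarjan). The delicate point, and the step I would spend most effort on, is the precise accounting that brings the separator size down to exactly $\sqrt{8n}$ rather than merely $O(\sqrt n)$; this is where the additive contributions $|L_a|+|L_b|$ and the cycle length $|V(C_{e^{\star}})|$ must be traded off, and where the choice of depth budget $\sqrt{2n}$ for $b-a$ is calibrated.
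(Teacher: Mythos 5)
The paper does not prove this statement; it is cited directly as the classical planar separator theorem of Lipton and Tarjan \cite{planar-separators:LT}, so there is no in-paper proof for your sketch to match. Your outline does follow the right Lipton--Tarjan skeleton (BFS layers, a middle layer, contraction of outer layers, and a balanced fundamental cycle in a triangulated embedding), and the linear-time algorithmic ingredients you list are the standard ones. However, the specific calibration you propose does not close, and the gap is not merely bookkeeping. If you insist on $|L_a|,|L_b|\le\sqrt{2n}$, then $|L_a|+|L_b|$ already consumes the entire budget $\sqrt{8n}=2\sqrt{2n}$, leaving nothing for the fundamental cycle $V(C_{e^{\star}})$, whose length can be as large as $2(b-a)+1$. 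Decoupling the layer sizes from the depth budget $b-a$, as you do, therefore cannot give the stated constant. The actual Lipton--Tarjan argument jointly optimizes: it chooses $l_0\le l_1$ and $l_2>l_1$ to minimize the \emph{combined} quantities $|L_{l_0}|+2(l_1-l_0)$ and $|L_{l_2}|+2(l_2-l_1-1)$, and an averaging argument shows the first is at most $2\sqrt{k}$ and the second at most $2\sqrt{n-k}$, where $k$ is the number of vertices in levels $\le l_1$. The fundamental-cycle length is charged against the $2(l_1-l_0)$ and $2(l_2-l_1-1)$ terms rather than paid for separately, and then $2\sqrt{k}+2\sqrt{n-k}\le 2\sqrt{2n}=\sqrt{8n}$ by concavity. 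So the step you explicitly defer --- ``the precise accounting'' --- is in fact where the theorem's constant comes from, and it requires this joint trade-off rather than the independent bounds you set up.
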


Combined with Lemma~\ref{thm:decomp}, the previous result when $G$ is planar yields Corollary~\ref{corollary:planar}.

\paragraph*{Graphs of bounded treewidth.}
Here we consider graphs for which their underlying undirected graph has bounded treewidth.
We will not insist on the formal definition of treewidth, as we only need the following fact.

\begin{theorem}[Reed \cite{Reed92}]
\label{thm:treewidth}
Graphs of treewidth at most $k$ are $k$-separable. Assuming that $k$ is constant, the separators can be found in linear time.
\end{theorem}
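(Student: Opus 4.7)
The plan is to first compute a tree decomposition $(T,\{X_t\}_{t\in V(T)})$ of $G$ of width at most $k$, and then extract a balanced separator by choosing a single bag $X_t$, selected via a centroid-type argument on the decomposition tree $T$. For constant $k$, such a tree decomposition can be computed in linear time by Bodlaender's algorithm, and every bag satisfies $|X_t|\le k+1=O(1)$, so any bag chosen as the separator has the claimed size.

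To pick the right bag, I would root $T$ at an arbitrary node and, for each $t\in V(T)$, consider the subtrees $T_1,\ldots,T_d$ of $T-t$, setting $V_i=\bigcup_{s\in V(T_i)}X_s$. The key structural fact, which follows directly from the running-intersection property of a tree decomposition, is that every connected component of $G-X_t$ is contained in some $V_i\setminus X_t$: if an edge $uv\in E(G)$ joined $u\in V_i\setminus X_t$ to $v\in V_j\setminus X_t$ with $i\neq j$, then some bag would contain both $u$ and $v$, and by running-intersection this bag would have to sit inside both $T_i$ and $T_j$, a contradiction. Hence $X_t$ is a separator of $G$, and it remains only to select $t$ so that $\max_i|V_i\setminus X_t|\le\alpha n$ for some fixed $\alpha<1$.

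Selecting such a $t$ is the standard tree-centroid task. I would assign each vertex $v\in V(G)$ to an arbitrary single bag that contains it, turning $T$ into a weighted tree of total weight $n$; the classical centroid argument then produces a node $t$ such that every branch weight, and therefore every $|V_i\setminus X_t|$, is at most $n/2$. The centroid is identified in $O(n)$ time by a single post-order traversal that accumulates subtree weights. Combined with the linear-time construction of the tree decomposition, the overall running time is linear. The subgraph-closure condition in Definition~\ref{def:vertex separator} is immediate, since intersecting every bag of a tree decomposition of $G$ with the vertex set of a subgraph $H\subseteq G$ yields a tree decomposition of $H$ of the same width.

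The main obstacle is concealed inside Bodlaender's linear-time tree-decomposition algorithm, whose construction and analysis are highly non-trivial; by contrast, the centroid-separator step is routine. If one were willing to weaken the preprocessing guarantee to polynomial time, the same argument goes through verbatim with any polynomial-time algorithm producing a tree decomposition of width $O(k)$, leaving only the elementary centroid step as the genuine content of the proof.
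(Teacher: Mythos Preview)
The paper does not prove this statement at all: Theorem~\ref{thm:treewidth} is quoted from Reed and used as a black box to derive Corollary~\ref{corollary:boundedtw}. So there is no ``paper's own proof'' to compare against; your write-up is simply a proof of the cited fact.

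Your argument is the standard one and is essentially correct. Two small remarks. First, a bag in a width-$k$ tree decomposition has size at most $k+1$, not $k$, so what you actually establish is $(k+1)$-separability; for the paper's purposes this is immaterial, since Lemma~\ref{thm:decomp} is applied with $s=0$ and only needs a constant-size separator. Second, your route is somewhat inverted relative to Reed's original paper: Reed first proves the balanced-separator theorem directly and then uses it to build tree decompositions, whereas you invoke Bodlaender's (later, and considerably harder) linear-time tree-decomposition algorithm and then extract a separator from a centroid bag. Both directions work, but if the goal is just the separator statement, Reed's direct argument avoids the heavy machinery you flag as the ``main obstacle.''
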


It is known that graphs with constant treewidth have $O(n)$ edges (see Reed \cite{Reed92}). This fact combined with Theorem~\ref{thm:treewidth} and Lemma~\ref{thm:decomp} implies that when $G=(V,E)$ is a directed graph, whose treewidth (of its underlying undirected graph) is bounded by a constant $k$, then we can build an SCC-decomposition tree for $G$ of height $O(\log n)$ in $O(n\log n)$ time. Thus, we obtain Corollary~\ref{corollary:boundedtw}.

\section{Improved data structure for general graphs}
\label{sec:improved}

In this section, we present an improved data structure for general graphs. Our data structure uses $O(n\sqrt{m})$ space and answers strong connectivity queries in $O(\sqrt{m})$ time.
The main idea of the improved oracle is that when we build the SCC-tree $\dectree$, we can stop the decomposition of a subgraph $G_t$ early if some appropriate conditions are satisfied (e.g., if $G_t$ is $3$-vertex connected). We refer to such a decomposition tree $\dectree$ of $G$ as a \emph{partial SCC-tree}.
Let $\Delta$ be an integer parameter in $[1,m]$. We say that a subgraph $G'$ of $G$ is ``large'' if it contains at least $\Delta+1$ edges, and we call it ``small'' otherwise.
%
Also, we say that a strongly connected graph $G$ is ``$\Delta$-good'' if it has the following property: For every separation pair $\{f_1,f_2\}$ of $G$, the graph $G'= G - \{f_1,f_2\}$ satisfies the following: (1) it has at most one large SCC $C$, and (2) for every vertex $v$ of $G'$ not belonging to $C$ it holds that either $G[\predecessors{G'}{v}\cup\{f_1,f_2\}]$ or $G[\successors{G'}{v}\cup\{f_1,f_2\}]$ contains at most $\Delta$ edges.
%

The following lemma shows that all 2FT-SC queries on a $\Delta$-good graph can be answered in $O(\Delta)$ time. More precisely, we can derive the answer to the query after accessing at most $4\cdot\Delta+4$ edges. This is done by performing four local searches with threshold $\Delta+1$ (starting from every one of the two query vertices, in both $G$ and $G^R$), without going beyond the failed vertices.
Since the graph is $\Delta$-good, the idea is that if a vertex $x$ can reach at least $\Delta+1$ edges and can be reached by at least $\Delta+1$ edges in $G-\{f_1,f_2\}$, then we can immediately infer that $x$ lies in the large SCC $C$ of $G-\{f_1,f_2\}$. 

\begin{lemma}
\label{lemma:Delta-good}
Let $G$ be a $\Delta$-good graph. Then, any $2$-fault strong connectivity query can be answered in $O(\Delta)$ time.
\end{lemma}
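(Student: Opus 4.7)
The plan is to answer a query $\twoftsc{x}{y}{f_1}{f_2}$ by performing four \emph{truncated} searches in the digraph $G' := G - \{f_1, f_2\}$: a forward and a backward BFS (or DFS) from each of $x$ and $y$, each halting as soon as $\Delta + 1$ edges have been traversed (the \emph{threshold case}) or the corresponding reachability set has been fully enumerated. Each such search inspects $O(\Delta)$ edges, so the overall running time is $O(\Delta)$. I will use Boolean flags $p_x, q_x, p_y, q_y$ recording, respectively, whether the forward and backward searches from $x$ and $y$ hit the threshold.

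For correctness, I first exploit the easy observation that if $x \scon y$ in $G'$ then $\successors{G'}{x} = \successors{G'}{y}$ and $\predecessors{G'}{x} = \predecessors{G'}{y}$, so the forward (respectively backward) searches from $x$ and $y$ explore identical edge sets and must agree on whether the threshold is hit. Consequently, whenever $p_x \ne p_y$ or $q_x \ne q_y$ the algorithm safely returns \textsc{false}, disposing of all ``inconsistent'' flag patterns. It then remains to handle two cases according to the common pattern.

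The central case is when all four flags are \textsc{true}. Here I argue that $x \scon y$ in $G'$ unconditionally. If $\{f_1, f_2\}$ is not a separation pair of $G$, then $G'$ is strongly connected and the claim is trivial. Otherwise, by the $\Delta$-good property, every vertex $v$ outside the unique large SCC $C$ of $G'$ has either $G[\predecessors{G'}{v} \cup \{f_1, f_2\}]$ or $G[\successors{G'}{v} \cup \{f_1, f_2\}]$ containing at most $\Delta$ edges; since the corresponding truncated search in $G'$ inspects only a subgraph of that, it cannot hit the threshold of $\Delta + 1$. Contrapositively, both thresholds being hit from $v$ forces $v \in C$; conversely, for $v \in C$ the search from $v$ reaches every edge of $C$ (at least $\Delta + 1$), so both thresholds are hit. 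Applied to $x$ and $y$, this places them both in $C$, yielding $x \scon y$.

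In the remaining case, at least one of $p_x, q_x$ is \textsc{false}, and consistency forces the matching flag for $y$ to also be \textsc{false}. Thus at least one of the pairs $(\successors{G'}{x}, \successors{G'}{y})$ or $(\predecessors{G'}{x}, \predecessors{G'}{y})$ has been fully computed by the searches. Since $x \scon y$ in $G'$ iff $y \in \successors{G'}{x}$ and $x \in \successors{G'}{y}$ (equivalently, iff $y \in \predecessors{G'}{x}$ and $x \in \predecessors{G'}{y}$), the query is answered by a direct lookup in the fully enumerated sets, which costs $O(\Delta)$ time. The main obstacle I expect is the third paragraph: I must use the $\Delta$-good property in both directions (to ensure that a vertex outside $C$ has one ``small'' direction and that a vertex in $C$ reaches all of $C$'s $\ge \Delta+1$ edges inside $G'$) to obtain the equivalence ``both thresholds hit $\Leftrightarrow v \in C$'' on which Case 1 rests.
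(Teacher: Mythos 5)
Your proof is correct and follows essentially the same approach as the paper: four truncated searches with threshold $\Delta+1$, using the $\Delta$-good property to conclude that a vertex with both thresholds hit lies in the unique large SCC $C$, and falling back to a direct membership test on the fully enumerated reachability sets when some search terminates early. Your explicit treatment of the case where $\{f_1,f_2\}$ is not a separation pair, and your up-front consistency check on the flag pairs, are minor reorganizations of the paper's argument rather than a different route.
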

\begin{proof}
Consider a query that asks if vertices $x$ and $y$ are strongly connected in $G'= G - \{f_1,f_2\}$.
Note that $x \leftrightarrow y$ in $G'$ if and only if $y \in \successors{G'}{x}$ and $x \in \successors{G'}{y}$, which implies
$\successors{G'}{x} = \successors{G'}{y}$ (equivalently,  $\predecessors{G'}{x} = \predecessors{G'}{y}$).
To answer the query, for $z \in \{x, y\}$, we run simultaneously searches from and to $z$ (by executing a BFS or DFS from $z$ in $G'$ and $(G')^R$, respectively) in order to discover the sets $\successors{G'}{z}$ and $\predecessors{G'}{z}$.
(To perform a search in $G'$, we execute a search in $G$ but without expanding the search from $f_1$ or $f_2$ if we happen to meet them.)
We stop such a search early, as soon as the number of traversed edges reaches $\Delta+1$.
If this happens both during the search for $\predecessors{G'}{z}$ and for $\successors{G'}{z}$, then we conclude that $z \in C$.
Thus, if the four searches for $\predecessors{G'}{z}$ and $\successors{G'}{z}$, for $z \in \{x, y\}$, are stopped early, we know that both $x$ and $y$ belong to $C$ and so they are strongly connected.

Now, suppose that the search for $\successors{G'}{x}$ traversed at most $\Delta$ edges. Then, $x \leftrightarrow y$ in $G'$ only if the search for $\successors{G'}{y}$ also traversed at most $\Delta$ edges. Hence, if the search for $\successors{G'}{y}$ stopped early, we know that $x$ and $y$ are not strongly connected in $G'$. Otherwise, we just need to check if $y \in \successors{G'}{x}$ and $x \in \successors{G'}{y}$.
The case where the search for $\predecessors{G'}{x}$ traversed at most $\Delta$ edges is analogous.

So, in every case, we can test if $x$ and $y$ are strongly connected in $G'$ in $O(\Delta)$ time.
\end{proof}

We call a separation pair $\{f_1,f_2\}$ of $G$ ``good'' if every SCC of $G - \{f_1,f_2\}$ is small.
Now, to build a partial SCC-tree $\dectree$, we distinguish the following cases.

\paragraph*{Case 1: $G$ is small.} We stop the decomposition here, because all queries on $G$ can be answered in $O(\Delta)$ time.

\paragraph*{Case 2: $G$ is $3$-vertex-connected.} Then, $G$ does not contain any separation pairs, so all queries are answered (in the affirmative) in $O(1)$ time.

\paragraph*{Case 3: $G$ contains a good separation pair $\{f_1,f_2\}$.} Here we choose $\{f_1,f_2\}$ as the next two split vertices of $G$, because then the decomposition will terminate in two more levels (since the grandchildren of $G$ in the decomposition tree will correspond to small graphs).

\paragraph*{Case 4: $G$ is $\Delta$-good.} Then we stop the decomposition here, because all queries can be answered in $O(\Delta)$ time by 
Lemma~\ref{lemma:Delta-good}.

\paragraph*{Case 5: None of the above applies.} For this case, we prove that $G$ has the following property:

\begin{lemma}
\label{lemma:case5}
There is at least one separation pair $\{f_1,f_2\}$ of $G$ such that every SCC of $G'=G - \{f_1,f_2\}$ is either small, or contains fewer than $m-\Delta$ edges.
\end{lemma}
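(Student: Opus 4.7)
My plan is to exhibit the desired separation pair as one that already witnesses the failure of the $\Delta$-good property. Being in Case 5 means in particular that $G$ is not $\Delta$-good, so by negating that definition there is a separation pair $\{f_1,f_2\}$ of $G$ for which, writing $G' := G - \{f_1,f_2\}$, at least one of the following holds: (i) $G'$ contains two or more large SCCs, or (ii) $G'$ has a unique large SCC $C$ and there is a vertex $v$ of $G'$ with $v \notin V(C)$ such that both $G[\predecessors{G'}{v}\cup\{f_1,f_2\}]$ and $G[\successors{G'}{v}\cup\{f_1,f_2\}]$ have more than $\Delta$ edges. I will show that in either case this particular $\{f_1,f_2\}$ already satisfies the conclusion of the lemma, since every small SCC trivially satisfies the claimed property, and we only need to bound the edge count of the large SCCs.

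Case (i) reduces to a clean counting argument: if $C_1,C_2$ are two large SCCs of $G'$, their vertex sets are pairwise disjoint and disjoint from $\{f_1,f_2\}$, so $E(C_1)$ and $E(C_2)$ are disjoint subsets of $E(G)$; since $|E(C_j)|\geq \Delta+1$, removing the contribution of the other large SCC forces $|E(C_i)| \le m-(\Delta+1) < m-\Delta$ for every large SCC $C_i$. Case (ii) is the one that carries actual content: I need to show $|E(C)| < m-\Delta$, and for this I will use the SCC-DAG structure of $G'$. Since $v\notin V(C)$ and $C$ is an SCC, $v$ cannot both reach and be reached by a vertex of $C$ in $G'$ (otherwise, combined with strong connectivity within $C$, $v$ would merge into $C$). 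Furthermore, if $v$ reaches some vertex of $C$, it reaches all of $V(C)$ (propagating through $C$), and symmetrically for the reverse direction. Hence $V(C)$ is disjoint from at least one of $\predecessors{G'}{v}$ and $\successors{G'}{v}$; say, without loss of generality, $V(C)\cap\predecessors{G'}{v}=\emptyset$. Combined with $\{f_1,f_2\}\cap V(C)=\emptyset$, the edge sets $E(C)$ and $E(G[\predecessors{G'}{v}\cup\{f_1,f_2\}])$ are disjoint subsets of $E(G)$, and the second has more than $\Delta$ edges by choice of $v$, so $|E(C)| < m-\Delta$ as required.

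The step I expect to require the most care is the DAG-reachability argument in case (ii): explicitly verifying that if $V(C)$ intersected both $\predecessors{G'}{v}$ and $\successors{G'}{v}$, then strong connectivity inside $C$ would give a cycle through $v$ and $C$, placing $v$ in $C$ and contradicting $v\notin V(C)$. Once this ``sided-ness'' is in hand, the rest is just accounting for edges that lie outside $V(C)$, which only needs $f_1,f_2\notin V(C)$ (immediate from $C$ being an SCC of $G'$). No additional properties of Cases 1--4 are needed; the lemma is really a direct consequence of what ``not $\Delta$-good'' gives us.
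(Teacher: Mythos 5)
Your proposal is correct and matches the paper's proof essentially step for step: both negate $\Delta$-goodness to get the witness pair $\{f_1,f_2\}$, dispatch the multiple-large-SCC case by a disjoint edge-set count, and in the one-large-SCC case observe that $V(C)$ must be disjoint from at least one of $\predecessors{G'}{v}$, $\successors{G'}{v}$ (since $C$ is an SCC of $G'$ and $v\notin V(C)$), then count edges. The only cosmetic difference is that you phrase the sided-ness step as a symmetric WLOG while the paper does an explicit two-way case split on whether $C\subset\successors{G'}{v}$.
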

\begin{proof}
Since $G$ is not $3$-vertex-connected, there is at least one separation pair.
Also, since $G$ is not $\Delta$-good, there exists at least one separation pair $\{f_1,f_2\}$ with the property that either: (a) $G - \{f_1,f_2\}$ contains more than one large SCC, or (b) $G - \{f_1,f_2\}$ contains only one large SCC $C$, and for at least one vertex $v$ of $G'$ that does not belong to $C$, we have
that both $G[\predecessors{G'}{v} \cup \{f_1,f_2\}]$ and $G[\successors{G'}{v} \cup \{f_1,f_2\}]$ contain at least $\Delta$ edges.

If (a) is true, then the Lemma holds since all SCCs of $G'$ contain fewer than $m-\Delta$ edges.
Now suppose that (b) is true. Since $C$ is a SCC of $G'$, we either have $\successors{G'}{v} \cap C = \emptyset$ or $C \subset \successors{G'}{v}$.
If $\successors{G'}{v}$ does not contain $C$, then $C$ has fewer than $m-\Delta$ edges, and all the other strongly connected components of $G'$ have size at most $\Delta$.
Hence, the Lemma holds.
Otherwise, $C \subset \successors{G'}{v}$, and since $v \not\in C$, we have $\predecessors{G'}{v} \cap C = \emptyset$.
Since $G[\predecessors{G'}{v} \cup \{f_1,f_2\}]$ contains at least $\Delta$ edges, $C$ has fewer than $m-\Delta$ edges. Hence, the Lemma holds in this case as well.
\end{proof}

Obviously, only the last case may lead to repeated decompositions of $G$, but due to Lemma~\ref{lemma:case5} this occurs at most $m/\Delta$ times.
Thus, the decomposition tree has height $O(m/\Delta)$, and so it requires $O(mn/\Delta)$ space. Moreover, the queries can be answered in $O(m/\Delta + \Delta)$ time.
This proves Theorem~\ref{thm:refined}.
The running time is minimized for $\Delta = \sqrt{m}$, which gives Corollary~\ref{corollary:refined}.

The initialization time of this data structure is $O(m/\Delta\cdot(f(n,m)+g(n,m,\Delta)))$, where $f(n,m)$ is the time to initialize a $2$-FT-SSR oracle on a graph with $n$ vertices and $m$ edges, and $g(n,m,\Delta)$ is the time that is needed in order to check whether a graph with $n$ vertices and $m$ edges is $\Delta$-good.

Although we may always set $\Delta=\sqrt{m}$ in order to minimize both $O(\Delta)$ and $O(m/\Delta)$ (the height of the decomposition tree), in practice we may have that a small enough $\Delta$ may be able to provide a partial SCC-tree with height $O(\Delta)$. For example, this is definitely the case when $G$ itself is $\Delta$-good. 
Otherwise, it may be that after deleting a few pairs of vertices, we arrive at subgraphs that are $\Delta$-good.

Table~\ref{tab:ImprovedTree} shows some examples of real graphs where we have computed a value for $\Delta$ such that the partial SCC-tree has height at most $\Delta$. Thus, we get data structures for those graphs that can answer 2FT-SC queries in $O(\Delta)$ time. We arrived at those values for $\Delta$ by essentially performing binary search, in order to find a $\Delta$ that is as small as possible and such that either the graph is $\Delta$-good, or it has a partial SCC-tree decomposition with height at most $\Delta$.

\subsection{Testing if a graph is $\Delta$-good}

The computationally demanding part here is to determine whether a graph is $\Delta$-good, for a specific $\Delta$. The straightforward method that is implied by the definition takes $O(n^2(m+n\Delta))$ time. (I.e., this simply checks the SCCs after removing every pair of vertices, and it performs local searches with threshold $\Delta+1$ starting from every vertex.) Instead, we use a method that takes $O(nm+\Sigma_{v\in V(G)}\mathit{SCC}_v(G)\Delta)$ time, where $\mathit{SSC}_v(G)$ denotes the total number of the SCCs of all graphs of the form $G-\{v,u\}$, for every vertex $u\in V(G)\setminus v$. This is more efficient than the straightforward method, and it seems to work much better than the stated bound, because we find that $\mathit{SSC}_v(G)$, although it can theoretically be as large as $\Omega(n^2)$, in practice it is usually close to $n$, for every $v\in V(G)$.

The idea is to check the SCCs after the removal of every vertex $v\in V(G)$ (this explains the $O(nm)$ part). If $G- v$ has at least three large SCCs, then we can immediately conclude that $G$ is not $\Delta$-good. Otherwise, we distinguish three cases, depending of whether $G- v$ has $0$, $1$ or $2$ large SCC(s). In the first case, we can terminate the computation, because all SCCs of $G- v$ are small. In the other two cases, we essentially rely on the work~\cite{GIP20:SICOMP}, with which we can compute in $O(\mathit{SCC}_v(G))$ time all the strongly connected components of $G-\{v,u\}$, for every vertex $u\in V(G)\setminus v$, by exploiting information from the dominator trees and the loop nesting forests of the SCCs of $G- v$. For every such small component, it is sufficient to select a representative vertex $x$, and perform the two local searches from $x$ in $G$ and $G^R$ with threshold $\Delta+1$ (and blocking vertices $v$ and $u$), in order to determine whether $x$ either reaches at most $\Delta$ edges, or is reached by at most $\Delta$ edges.

Still, our result on the partial SCC-tree decomposition (Corollary~\ref{corollary:refined}) is mainly of theoretical interest, since the procedure for determining whether a graph is $\Delta$-good becomes very slow even in moderately large graphs. Thus, in Sections~\ref{sec:BFS} and \ref{sec:empirical} we suggest much more efficient heuristics, that work remarkably well in practice.

\ignore{
For this purpose, we provide an algorithm that has better performance than the straightforward method that is implied from the definition, and takes $O(n^2(m+n\Delta))$ time, by exploiting properties of the dominator trees and the loop nesting forests~\cite{2C:GIP:SODA}.
We provide the details next.

\subsection{Testing if a graph is $\Delta$-good.}

The straightforward method that is implied by the definition takes $O(n^2(m+n\Delta))$ time. (I.e., this simply checks the SCCs after removing every pair of vertices, and it performs local searches with threshold $\Delta+1$ starting from every vertex.) Instead, we use a method that takes $O(nm+\Sigma_{v\in V(G)}\mathit{SCC}_v(G)\Delta)$ time, where $\mathit{SSC}_v(G)$ denotes the total number of the SCCs of all graphs of the form $G-\{v,u\}$, for every vertex $u\in G-v$. This is more efficient than the straightforward method, and it seems to work much better than the stated bound, because we find that $\mathit{SSC}_v(G)$, although theoretically it can be as large as $\Omega(n^2)$, in practice it is usually close to $n$, for every $v\in G$.

The idea is to check the SCCs after the removal of every vertex $v\in G$ (this explains the $O(nm)$ part). If $G- v$ has at least three large SCCs, then we can immediately conclude that $G$ is not $\Delta$-good. Otherwise, we distinguish three cases, depending of whether $G- v$ has $0$, $1$ or $2$ large SCC(s). In the first case, we can terminate the computation, because all SCCs of $G- v$ are small. In the other two cases, we essentially rely on the work~\cite{2C:GIP:SODA}, with which we can compute in $O(\mathit{SCC}_v(G))$ time all the strongly connected components of $G-\{v,u\}$, for every vertex $u\in G- v$, by exploiting information from the dominator trees and the loop nesting forests of the SCCs of $G- v$. For every such small component, it is sufficient to select a representative vertex $x$, and perform the two local searches from $x$ in $G$ and $G^R$ with threshold $\Delta+1$ (and blocking vertices $v$ and $u$), in order to determine whether $x$ either reaches at most $\Delta$ edges, or is reached by at most $\Delta$ edges.

Still, our result on the partial SCC-tree decomposition (Corollary~\ref{corollary:refined}) is mainly of theoretical interest, since the procedure for determining whether a graph is $\Delta$-good becomes very slow even in moderately large graphs. Thus, in Sections~\ref{sec:BFS} and \ref{sec:empirical} we suggest much more efficient heuristics, that work remarkably well in practice.

\subsection{Choosing a good $\Delta$ for a partial SCC-tree decomposition.}

Although we may always set $\Delta=\sqrt{m}$ in order to minimize both $O(\Delta)$ and $O(m/\Delta)$ (the height of the decomposition tree), in practice we may have that a small enough $\Delta$ may be able to provide a partial SCC-tree with height $O(\Delta)$. For example, this is definitely the case when $G$ itself is $\Delta$-good. 
Otherwise, it may be that after deleting a few pairs of vertices, we arrive at subgraphs that are $\Delta$-good.

Table~\ref{tab:ImprovedTree} shows some examples of real graphs where we have computed a value for $\Delta$ such that the partial SCC-tree has height at most $\Delta$. Thus, we get data structures for those graphs that can answer 2FT-SC queries in $O(\Delta)$ time. We arrived at those values for $\Delta$ by essentially performing binary search, in order to find a $\Delta$ that is as small as possible and such that either the graph is $\Delta$-good, or it has a partial SCC-tree decomposition with height at most $\Delta$.

The computationally demanding part here is to determine whether a graph is $\Delta$-good, for a specific $\Delta$. The straightforward method that is implied by the definition takes $O(n^2(m+n\Delta))$ time. (I.e., this simply checks the SCCs after removing every pair of vertices, and it performs local searches with threshold $\Delta+1$ starting from every vertex.) Instead, we use a method that takes $O(nm+\Sigma_{v\in V(G)}\mathit{SCC}_v(G)\Delta)$ time, where $\mathit{SSC}_v(G)$ denotes the total number of strongly connected components of $G\setminus\{v,u\}$, for every vertex $u\in G\setminus{v}$. In practice, this works much better than the stated bound, because $\mathit{SSC}_v(G)$ is approximately $\Theta(n)$, for every $v\in G$.

The idea is to check the SCCs after the removal of every vertex $v\in G$ (this explains the $O(nm)$ part). If $G\setminus v$ has at least three large SCCs, then we can immediately determine that $G$ is not $\Delta$-good. Otherwise, we distinguish three cases, depending of whether $G\setminus v$ has $0$, $1$ or $2$ large SCCs. In the first case, we can terminate the computation, because all SCCs of $G\setminus v$ are small. In the other two cases, we essentially rely on the work~\cite{2C:GIP:SODA}, with which we can compute in $O(\mathit{SCC}_v(G))$ time all the strongly connected components of $G\setminus\{v,u\}$, for every vertex $u\in G\setminus v$, by exploiting information from the dominator trees and the loop nesting forests of the SCCs of $G\setminus v$. For every such small component, it suffices to select a representative vertex $x$, and perform the two local searches from $x$ in $G$ and $G^R$ with threshold $\Delta+1$ (and blocking vertices $v$ and $u$), in order to determine whether $x$ either reaches at most $\Delta$ edges, or is reached by at most $\Delta$ edges.

Still, our result on the partial SCC-tree decomposition (Corollary~\ref{corollary:refined}) is mainly of theoretical interest, since the procedure for determining whether a graph is $\Delta$-good becomes very slow even in moderately large graphs. Thus, in the following we suggest much more efficient heuristics, that work remarkably well in practice.
}

\section{BFS-based oracles}
\label{sec:BFS}

The straightforward way to determine whether two vertices $x$ and $y$ remain strongly connected after the removal of $f_1,f_2$, is to perform a graph traversal (e.g., BFS) in order to check whether $x$ reaches $y$ in $G-\{f_1,f_2\}$, and reversely. We call this algorithm $\mathtt{simpleBFS}$. We measure the work done by $\mathtt{simpleBFS}$ by counting the number of edges that we had to access in order to get the answer.

A well-established improvement over the simple BFS is the \emph{bidirectional} BFS \cite{hanauer_et_al:LIPIcs:2020:12088}. This works by alternating the search from $x$ to $y$ in $G$ with a search from $y$ to $x$ in $G^R$ (in order to determine whether $x$ reaches $y$). If either traversal reaches a vertex that was discovered by the other, then both terminate, and the answer is positive. If either search gets stuck and is unable to make progress, we conclude that the answer is negative. We implemented the variant where the searches alternate immediately after discovering a new edge, and we call our implementation $\mathtt{biBFS}$. Thus, $\mathtt{biBFS}$ works as follows. First, we perform a bidirectional BFS in order to determine if $x$ reaches $y$; if that is the case, then we perform a second bidirectional BFS in order to determine if $y$ reaches $x$. Again, we measure the work done by $\mathtt{biBFS}$ by counting the number of edges that we had to access in order to get the answer. As expected, $\mathtt{biBFS}$ has to do much less work on average than $\mathtt{simpleBFS}$, and thus we use $\mathtt{biBFS}$ as the baseline.

One of our most important contributions is a heuristic that we call \emph{seeded} BFS. This precomputes some data structures on a few (random) vertices that we call seeds, so that we can either use them directly before initiating the BFS, or if we meet them during the BFS.\footnote{In the literature, the vertices that support such functionality are commonly called supportive vertices, or landmarks \cite{hanauer_et_al:LIPIcs:2020:12088, Goldberg:Astar}.} Specifically, we use every seed $r$ in order to expand a BFS tree $\mathit{BFS}_r$ of $G$ with root $r$, and we maintain the preordering of all vertices w.r.t. $\mathit{BFS}_r$, as well as the number of descendants $\mathit{ND}_r(v)$ on $\mathit{BFS}_r$ for every vertex $v\in V(G)$. This information can be computed in linear time (e.g., after a DFS on $\mathit{BFS}_r$), and it can be used in order to answer ancestry queries w.r.t. $\mathit{BFS}_r$ in $O(1)$ time \cite{dfs:t}. We do the same on $G^R$ with the same seed vertices. 

Now, in order to answer an SC query for $x$ and $y$ in $G-\{f_1,f_2\}$, we first perform a bidirectional BFS from $x$ to $y$, with the following twist: if we meet a seed $r$, then we check whether either of $f_1,f_2$ is an ancestor of $y$ on $\mathit{BFS}_r$. If neither of $f_1,f_2$ is an ancestor of $y$, then we can immediately conclude that $x$ reaches $y$ in $G-
\{f_1,f_2\}$. Otherwise, we just continue the search. Then we use the same method in order to determine the reachability from $y$ to $x$ in $G-\{f_1,f_2\}$. Furthermore, we can improve on this idea a little more: even before starting the BFS, we perform this simple check that we have described, in order to see if $x$ reaches a seed, and then if this seed reaches $y$. If the number of seeds is very small (e.g., $10$), then this initial scanning of the seeds takes a negligible amount of time. What is remarkable, is that even with a single random seed, we observe  that random 2FT-SC queries will be answered most of the time even before the BFS begins. We call our implementation of this idea $\mathtt{sBFS}$. Here, the two measures of efficiency are, first, whether the answer was given by a seed (before starting the BFS), and second, what is the total number of edges that we had to access (in case that none of the seeds could provide immediately the answer). We expect the average number of accessed edges to be much lower than in $\mathtt{biBFS}$, because we use the seeds to speed up the search in the process. If $\mathtt{sBFS}$ uses $k$ seeds, we denote the algorithm as $\mathtt{sBFS}(k)$. 

Observe that the checks at the seeds may provide an inconclusive answer (i.e., if either $f_1$ or $f_2$ is an ancestor of the target vertex on the tree-path starting from the seed). Thus, we may instead initialize a 2FT-SSR data structure on every seed, so that every reachability query provides immediately the real answer. In this case, we can extract all the information that the seeds can provide before the BFS begins. We do this by using the four reachability queries $\mathit{2FTR}_r(x,f_1,f_2)$, $\mathit{2FTR}_r(y,f_1,f_2)$, $\mathit{2FTR}_r^R(x,f_1,f_2)$ and $\mathit{2FTR}_r^R(y,f_1,f_2)$, as we explained in Section~\ref{sec:dectree}. We call our implementation of this idea $\mathtt{ChBFS}$ (or $\mathtt{ChBFS}(k)$, to emphasize the use of $k$ seeds). As in $\mathtt{sBFS}$, the two measures of efficiency here are whether one of the seeds provided the answer, or, if not, what is the number of edges that we had to traverse with the bidirectional BFS in order to get the answer.

We consider the queries that force $\mathtt{ChBFS}$ to perform BFS in order to get the answer, worst-case instances for this algorithm. In order to reduce the possibility of such events, we propose the idea of organizing the seeds on a decomposition tree. This reduces the possibility of the worst-case instances, and it may also provide some extra ``free'' seeds on the intermediary levels of the decomposition tree.
We elaborate on this idea in Section~\ref{sec:ch-tree}.

\section{Empirical analysis}
\label{sec:empirical}

We implemented our algorithms in C$\mathtt{{+}{+}}$, using g$\mathtt{{+}{+}}$ v.7.4.0 with full optimization (flag -O3) to compile the code.\footnote{Our code, together with some sample input instances is available at 
\url{https://github.com/dtsok/2-FT-SC-O}.}
The reported running times were measured on a GNU/Linux machine, with Ubuntu (18.04.5 LTS): a Dell PowerEdge R715 server 64-bit NUMA machine with four AMD Opteron 6376 processors and 
185GB of RAM memory. Each processor has 8 cores sharing a 16MB L3 cache, and each core has a 2MB private L2 cache and 2300MHz speed. In our experiments we did not use any parallelization, and each algorithm ran on a single core. We report CPU times measured with the \texttt{high\_resolution\_clock} function of the standard library \texttt{chrono}, averaged over ten different runs.

The real-world graphs that we used in our experiments are reported in Table~\ref{tab:datasets}. From each original graph we extracted its largest SCC, except for $\mathtt{Google\_small}$ for which we used the second-largest SCC, since the largest SCC had more than 400K vertices and 3M edges, and therefore it was too big for our experiments.


\begin{table*}[h!]
	\centering
	\begin{small}
	\begin{tabular}{llrrrrrr}
		\hline
		\multicolumn{1}{l}{Graph} & \multicolumn{1}{l}{Type}  & $n$ & \multicolumn{1}{r}{$m$} &
		\multicolumn{1}{r}{$n_a$} & \multicolumn{1}{r}{$n_{sp}$} & \multicolumn{1}{r}{$d$} & \multicolumn{1}{r}{Reference} \\
\hline
Google\_small & web graph & 950 & 1,969 & 179 & 182 & 10 &\cite{netRep} \\		
Twitter & communication network  & 1,726 & 6,910 & 615 & 1,005 & 18 &\cite{netRep} \\
Rome & road network & 3,353 & 8,870 & 789 & 1,978 & 57 &\cite{ch9-url} \\
Gnutella25 & p2p network & 5,152 & 17,691 & 1,840 & 3,578 & 21 &\cite{snapnets} \\
Lastfm-Asia & social network & 7,624 & 55,612 & 1,338 & 2,455 & 15 &\cite{snapnets} \\
Epinions1 & social network & 32,220 & 442,768 & 8,194 & 11,460 & 16 &\cite{snapnets} \\
NotreDame & web graph & 48,715 & 267,647 & 9,026 & 15,389 & 96 &\cite{snapnets} \\
Stanford & web graph & 150,475 & 1,576,157 & 20,244 & 56,404 & 210 &\cite{snapnets} \\
Amazon0302 & co-purchase graph & 241,761 &	1,131,217 & 69,616 & 131,120 & 88 &\cite{snapnets} \\
USA-road-NY & road network & 264,346 & 733,846 & 46,476 & 120,823 & 720 &\cite{ch9-url} \\
\hline
	\end{tabular}
	\caption{Graph instances used in the experiments, taken from Network Data Repository~\cite{netRep}, 9th DIMACS Implementation Challenge~\cite{ch9-url}, and Stanford Large Network Dataset Collection~\cite{snapnets}.
		$n$ and $m$ are the numbers of vertices and edges, respectively, $n_a$ is the number of strong articulation points (SAPs), $n_{sp}$ is the number of vertices that are SAPs or belong to a proper separation pair, and $d$ is the diameter of the graph. \label{tab:datasets}}
\end{small}
\end{table*}

From Table~\ref{tab:datasets} we observe that a significant fraction of the vertices belong to at least one proper separation pair (value $n_{sp}$ in the table). Indeed, at least $19\%$ of the vertices in every graph, and $44\%$ on average, belong to a proper separation pair.

\subsection{Height of the decomposition tree}
\label{sec:experimentalheight}

We consider various methods for constructing a decomposition tree $T$ of $G$ with small height $h$ in practice.
We note that such decomposition trees are useful in various decremental connectivity algorithms (see, e.g., \cite{Chechik2016Decremental,decdom17,scc-decomposition}), so this experimental study may be of independent interest.
We consider only fast methods for selecting split vertices, detailed in Table~\ref{tab:spilt-algorithms}.
Note that all methods require $O(m)$ time to select a split node $x$ of $G$, except \textsf{Random} and \textsf{LNT}.
\textsf{Random} selects a vertex in constant time, but still requires $O(m+n)$ time to compute the SCCs of $G-x$.
In \textsf{LNT} we use the observation that a loop nesting tree of $G$~\cite{st:t} is a valid decomposition tree, and can be computed in $O(m)$ total time~\cite{dominators:bgkrtw}.

The \textsf{$q$-Separator ($q$Sep)} method is based on the following definition:

\begin{definition} ($q$-separator~\cite{Chechik2016Decremental})
Let $G = (V,E)$ be a graph with $n$ vertices, and let $q \ge 1$ be an integer. A $q$-separator for $G$ is a non-empty set of vertices $S \subseteq V$, such that each SCC of $G \setminus S$ contains at most $n - q \cdot |S|$ vertices.
\end{definition}

Chechik et al.~\cite{Chechik2016Decremental} showed that every strongly connected graph $G$ with $n$ vertices, $m$ edges, and diameter $\delta \ge \sqrt{n}$, has a $q$-separator with quality $q=\sqrt{n}/(2 \log{n})$ that can be computed in $O(m)$ time.
%
%
Here we do not wish to compute the exact diameter of the graph, as this is expensive~\cite{GraphDiameter}. Instead, we apply
\textsf{$q$Sep} if for an arbitrarily chosen start vertex, the longest BFS path in either $G$ or $G^R$ is at least $\sqrt{n}$.
(This folklore $O(m)$-time algorithm gives a $2$-approximation of the diameter~\cite{GraphDiameter}.)
If this is the case, then we remove the $|S|$ vertices of the $q$-separator one at a time.
Otherwise, we need to choose a split vertex by applying some other method. In our experiments, we combined \textsf{$q$Sep} with the \textsf{MostCriticalNode (MCN)} algorithm from \cite{GIP20:SICOMP,Paudel:2018}.

\begin{table*}[h!]
	\tabulinesep=2mm	
	\begin{small}
		\extrarowsep=4pt
		\hspace{-0.4cm}
		\begin{tabular}{p{3cm}p{8.5cm}p{2cm}p{2cm}} 
			\hline
			Algorithm & Technique & Complexity & Reference \\
			\hline

\textsf{Random} & Choose the split vertex uniformly at random  & $O(1)$ &  \\
\textsf{Loop nesting tree (LNT)} & Use a loop nesting tree as the decomposition tree & $O(m)^{\ast}$ & \cite{dominators:bgkrtw,st:t} \\
\textsf{LabelPropagation (LP)} & Partition vertex set into communities and select the vertex with maximum number of neighbors in other communities & $O(m)^{\ast\ast}$ & \cite{Boldi2013,LabelPropagation} \\
\textsf{PageRank (PR)} & Compute the Page Rank of all vertices and return the one with maximum value & $O(m)^{\ast\ast}$ & \cite{PageRank:TR} \\
\textsf{MostCriticalNode (MCN)} & Return the vertex whose deletion minimizes the number of strongly connected pairs & $O(m)$ & \cite{GIP20:SICOMP,Paudel:2018} \\
\textsf{$q$-Separator ($q$Sep)} & Compute a high-quality separator for a graph with a high diameter ($\ge \sqrt{n}$) & $O(m) $ & \cite{Chechik2016Decremental} \\
\textsf{$q$-Separator and MostCriticalNode ($q$Sep$+$MCN)} & If the graph has high diameter then compute a high-quality separator, otherwise compute the MCN & $O(m) $ & \cite{Chechik2016Decremental,GIP20:SICOMP,Paudel:2018} \\
			\hline
		\end{tabular}
		\caption{An overview of the algorithms considered for selecting split vertices of the decomposition tree. The bounds refer to a digraph with $n$ vertices and $m$ edges.
$^{(\ast)}$The stated bound for \textsf{LNT} corresponds to the total time for computing the complete decomposition tree. $^{(\ast\ast)}$The stated bounds for \textsf{LabelPropagation} and \textsf{PageRank} assume that they run for a constant number of iterations.}
		\label{tab:spilt-algorithms}
	\end{small}
\end{table*}

The experimental results for the graphs of Table~\ref{tab:datasets} are presented in Table~\ref{tab:height}, and are plotted in Figure~\ref{figure:dectree-height}. 
We observe that \textsf{MCN} and \textsf{$q$Sep$+$MCN} achieved overall significantly smaller decomposition height compared to the other methods.
In fact, \textsf{Random}, \textsf{LP}, and \textsf{PR} did not manage to produce the decomposition tree for the largest graphs (Amazon0302 and USA-road-NY) in our collection, due to memory or running time restrictions.
For the remaining (smaller) graphs in our collection, \textsf{Random} performed very poorly, giving a decomposition height that was larger by a factor of $11.9$ on average compared to \textsf{MCN}.
%
Also, 
on average, \textsf{LP} and \textsf{LNT} performed better than \textsf{PR}; we see that
\textsf{MCN} produced a tree height that, on average, was smaller by a factor of $2.1$ and $2.5$ compared to \textsf{LP} and \textsf{LNT}, respectively, and by a factor of $4.1$ compared to \textsf{PR}.
Finally, \textsf{MCN} and \textsf{$q$Sep$+$MCN} have similar performance in most graphs, but for the two road networks (Rome and USA-road-NY), \textsf{$q$Sep$+$MCN} produce a significantly smaller decomposition height.

\begin{figure*}[h!]
\begin{center}
\centerline{\includegraphics[trim={0 0 0 2cm}, clip=true, width=\textwidth]{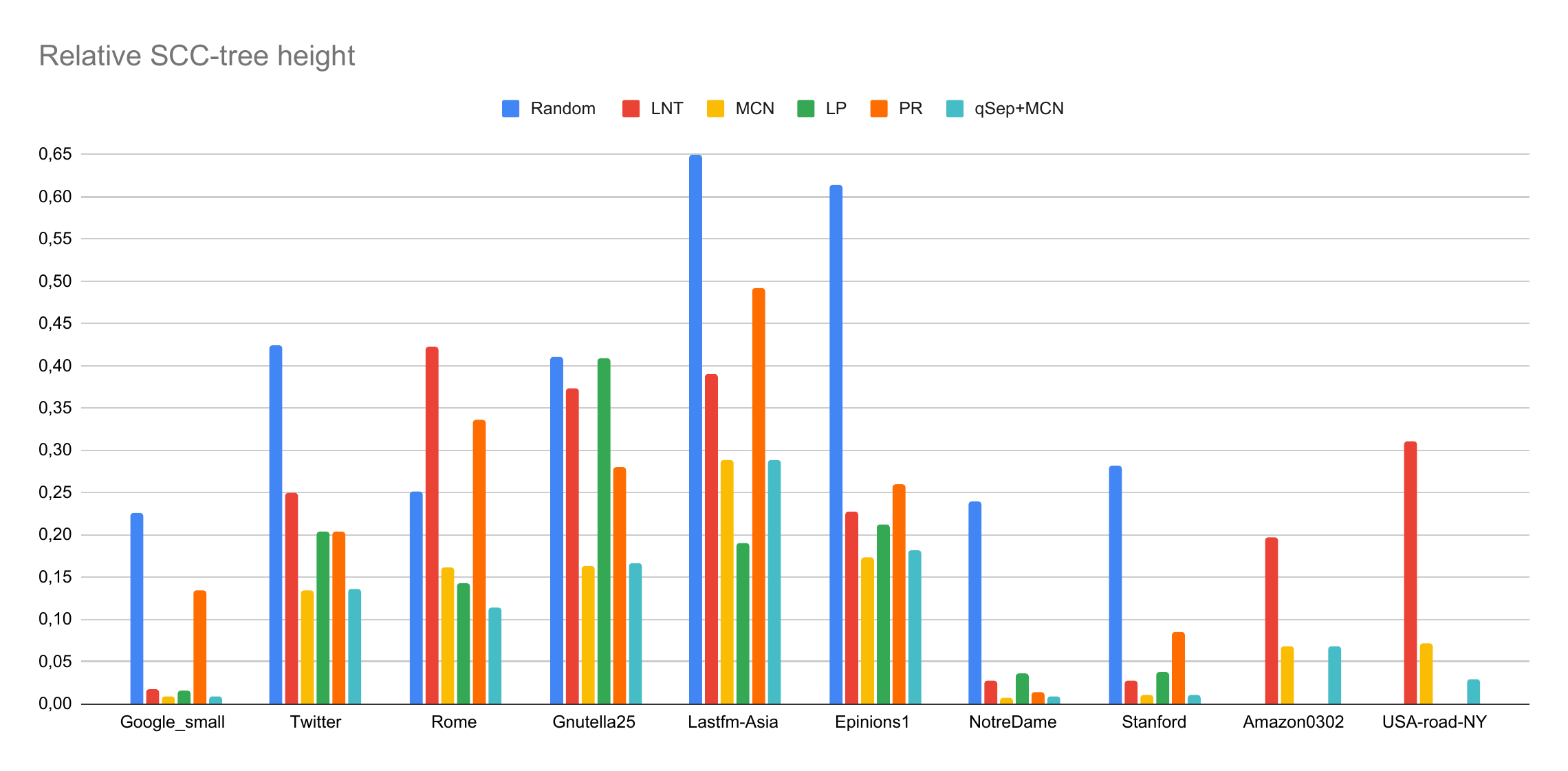}}
\caption{Relative SCC-tree height of the graphs of Table~\ref{tab:datasets} w.r.t. to the number of vertices, resulting from the split vertex selection algorithms of Table~\ref{tab:spilt-algorithms}.\label{figure:dectree-height}}
\end{center}
\end{figure*}

\begin{table*}[h!]
	\centering
 \smaller
	\begin{tabular}{lrrrrrr}

\hline

Graph	& \textsf{Random} & \textsf{LNT} & \textsf{MCN} & \textsf{LP} & \textsf{PR} & \textsf{$q$Sep$+$MCN} \\

\hline

Google\_small & 214 & 16 &  9 & 14 & 128 & 9 \\
Twitter & 732 & 430 & 232 & 352 & 351 & 235 \\
Rome & 843 &1,417 & 542 & 478 & 1,125 & 380 \\
Gnutella25 & 2,118 & 1,920 & 838 & 2,105 & 1,440 & 858 \\
Lastfm-Asia	& 4,958 & 2,981 &2,202 & 1,443 & 3,752 & 2,198 \\
Epinions1 & 19,764 & 7,317 & 5,602 & 6,826 & 8,367 & 5,857 \\
NotreDame & 11,688 & 1,346 & 365 & 1,704 & 672 & 390 \\
Stanford & 42,383 & 4,161 & 1,617 & 5,738 & 12,694 & 1,638 \\
Amazon0302 & $\dag$ & 47,484 & 16,615 & $\ddag$ & $\ddag$ & 16,606 \\
USA-road-NY & $\dag$ & 82,098 & 19,073 & $\ddag$ & $\ddag$ & 7,829 \\
\hline
	\end{tabular}
	\caption{SCC-tree height of the graphs of Table~\ref{tab:datasets}, resulting from the split vertex selection algorithms of Table~\ref{tab:spilt-algorithms}. The symbols $\dag$ and $\ddag$ refer to decompositions that were not completed due to exceeding the RAM memory of our system 
 ($>185$GB) or due to requiring more than $48$ hours.
\label{tab:height}}
\end{table*}

\begin{table}[h!]\centering
\caption{Characteristics of partial SCC-trees achieved by the algorithm of Section~\ref{sec:improved}. 
}\label{tab:ImprovedTree}
\begin{tabular}{lrrrrr}\toprule
Graph & $m$ & $\lfloor \sqrt{m} \rfloor$ & $\Delta$ & height \\\midrule
Google\_small &1,969 &44 &5 &3 \\
Twitter &6,910 &83 &35 &0 \\
Rome99 &8,870 &94 &46 &0 \\
Gnutella25 &17,691 &133 &11 &0 \\
Lastfm-Asia &55,612 &235 &21 &0 \\
NotreDame &267,647 &517 &170 &168 \\
Epinions1 &442,768 &665 &60 &0 \\
Stanford &1,576,157 &1255 &$\leq 700$ &0 \\
\bottomrule
\end{tabular}
\end{table}


Table~\ref{tab:ImprovedTree} reports the characteristics of partial SCC-trees achieved by the algorithm of Section~\ref{sec:improved} for some input graphs. Specifically, we report the height of the partial SCC-tree and the value of the parameter $\Delta$ which gives the maximum number of edges that need to be explored in order to answer a query.
We were not able to include results for the larger graphs in our collection due to high running times to compute the partial SCC-tree. Also, since the computation for Stanford was taking too long, we checked the values $\Delta=100,250,500,700$, and we found that it is $700$-good. Thus, the optimal $\Delta$ for Stanford is $500<\Delta\leq 700$.
We observe that the results are very encouraging. 
For example, we see that Epinions1 is a $60$-good graph. Thus, we can answer every 2FT-SC query on Epinions1 after scanning at most $4\times 60+4$ edges. (We note that Epinions1 has $442$,$768$ edges.) In NotreDame, we may have to reach a tree-node of depth 168 in order to arrive at a $170$-good subgraph, in which we can answer the 2FT-SC query after scanning at most $4\times 170+4$ edges. Before that, we will have to perform $4\times 168$ 2FT-SSR queries on the tree-nodes that we traverse. This is still much faster than the scanning of $\sim{15000}$ edges that we have to perform on average with bidirectional BFS on NotreDame, as reported in Table~\ref{tab:BFS-times}.

\subsection{Answering queries}
\label{sec:experimentalqueries}

First, we consider random queries. We create each query $\twoftsc{x}{y}{f_1}{f_2}$ by selecting the query vertices $x,y$ and the failed vertices $f_1,f_2$ uniformly at random.
The corresponding results for the (basic) SCC-tree are reported in Table~\ref{tab:RandomQueriesTree}.
Evidently, the SCC-tree is very effective, as almost all queries are answered at the root node of the tree.

\begin{table*}[h!]
\small
\hspace{-1cm}
\begin{tabular}{l|c|ccc|cc|cc|cc} \hline 
\multirow{2}{*}{Graph} &tree &\multicolumn{3}{|c}{query depth} &\multicolumn{2}{|c}{query time} &\multicolumn{2}{|c}{query result } &\multicolumn{2}{|c}{avg. calls } \\ 
&depth &min &max &avg. &total (s) &avg. (s) &$+$ &$-$ &$2$-FT-SSR-O &$1$-FT-SC-O \\ \hline 
Google\_small &9 &0 &5 &0.0030 &5.00e-2 &5.00e-8 &987,220 &12,780 &3.976 &0.000181 \\
Twitter &232 &0 &3 &0.0010 &6.18e-2 &6.18e-8 &998,083 &1,917 &3.990 &0.001000 \\
Rome99 &542 &0 &1 &0.0005 &8.70e-2 &8.70e-8 &999,623 &377 &3.997 &0.000575 \\
Gnutella25 &838 &0 &1 &0.0004 &6.70e-2 &6.70e-8 &999,501 &499 &3.998 &0.000410 \\
Lastfm-Asia & 2,202 & 0 & 1 & 0.0002 & 5.40e-2 & 5.40e-8 & 999,851 & 149 & 3.999 & 0.000243 \\
NotreDame & 365 &0 &9 &0.0001 &7.24e-2 &7.24e-8 &999,760 &240 &4.000 &0.000034 \\
Stanford &1,617 &0 &1 &0.0000 &8.99e-2 &8.99e-8 &999,953 &47 &4.000 &0.000016 \\
Epinions1 & 5,602 &0 &1 &0.0000 &6.60e-2 &6.60e-8 &999,920 &80 &4.000 &0.000047 \\
Amazon0302 &16,615 &0 &1 &0.0000 &9.90e-2 &9.90e-8 &999,979 &21 &4.000 &0.000008 \\
USA-NY & 19,073 &0 &1 &0.0000 &3.23e-1 &3.23e-7 &999,993 &7 &4.000 &0.000008 \\
\hline
\end{tabular}
\caption{Results for $1$M random queries using the SCC-tree with split vertices selected by MCN.}\label{tab:RandomQueriesTree}
\end{table*}

In Table~\ref{tab:BFS-times} we see the times for answering 1M queries using $\mathtt{simpleBFS}$ and $\mathtt{biBFS}$. We also report the average number of edge accesses per query, because this is a machine-independent measure of efficiency, and it can also serve as an accurate indicator of the running time of those algorithms. $\mathtt{biBFS}$ charges every edge access a little higher, because every new edge discovery is succeeded by an alteration to the direction of the BFS. From Table~\ref{tab:BFS-times} we can see that $\mathtt{biBFS}$ performs much better than $\mathtt{simpleBFS}$, and thus we use $\mathtt{biBFS}$ as the baseline, and as the last resort when all other heuristics fail to provide the answer.

\begin{table*}[h!]
\hspace{-0.8cm}
\begin{tabular}{l|rrr|rrr}
\hline
\multirow{2}{*}{Graph} &\multicolumn{3}{c|}{simpleBFS } &\multicolumn{3}{c}{biBFS} \\
& \#edges/query &time (s) & edge access (ns) & \#edges/query &time (s) &edge access (ns) \\
\hline
Google\_small &1,527.15 &3.52 &2.30 &148.65 &0.80 &5.40 \\
Twitter &4,777.29 &24.88 &5.21 &179.02 &1.13 &6.31 \\
Rome99 &8,443.75 &56.71 &6.72 &3,665.99 &36.56 &9.97 \\
Gnutella25 &9,749.95 &66.58 &6.83 &302.92 &1.82 &6.01 \\
Lastfm-Asia &40,162.93 &142.29 &3.54 &997.71 &6.03 &6.04 \\
NotreDame &220,197.49 &590.06 &2.68 &15,003.47 &59.36 &3.96 \\
Epinions1 &319,384.06 &660.44 &2.07 &273.72 &1.10 &4.03 \\
Stanford &1,273,970.76 &3,160.50 &2.48 &89,705.73 &449.72 &5.01 \\
Amazon0302 &987,822.00 &8,438.32 &8.54 &19,412.9 &233.06 &12.00 \\
USA-road-NY &742,513.00 &6,493.97 &8.75 &431,299.00 &4,266.94 &9.90 \\
\hline
\end{tabular}
\caption{Results for 1M random queries using $\mathtt{simpleBFS}$ and $\mathtt{biBFS}$. Here is shown the number of edges that we had to access on average per query, as well as the total time for answering all queries on every graph. The third column for every algorithm shows the time in nanoseconds that is charged to every edge access. 
}\label{tab:BFS-times}
\end{table*}


\begin{table*}[h!]
\scriptsize
\hspace{-0.45cm}
\begin{tabular}{lrrrrrrrrr}\toprule
100M CH QUERIES &Google\_small &Twitter &Rome99 &Gnutella25 &Lastfm-Asia &NotreDame &Stanford &Epinions1 \\\midrule
time (s) &1.170 &1.412 &1.988 &1.535 &2.277 &1.477 &1.766 &1.532 \\
time/query (ns) &11.699 &14.116 &19.883 &15.345 &22.768 &14.770 &17.663 &15.325 \\
\bottomrule
\end{tabular}
\caption{\small The total time for answering 100M 2FT-SSR queries using our implementation of Choudhary's $2$FT-SSR data structure \cite{choudhary:ICALP16}. By comparing the times/query with the times per edge access in Table~\ref{tab:BFS-times}, we can see that the time per 2FT-SSR query is comparable to a few edge accesses. We report the average over 10 different random choices of CH-seeds (we note that the variance per graph is negligent). }\label{tab:Choudhary-times}
\end{table*}

Our experiments demonstrate the superiority of $\mathtt{sBFS}$ and $\mathtt{ChBFS}$, compared to $\mathtt{simpleBFS}$ and $\mathtt{biBFS}$. We tested those algorithms with a few number $k$ of seeds, $k\in\{1,2,5,10\}$. In Tables~\ref{tab:Rome99-bfs} to \ref{tab:Stanford-bfs} we report the two indicators of efficiency of the seeded BFS algorithms. That is, we compute the percentage of the queries that are answered simply by querying the seeds (let us call these ``good'' instances), and the average number of edges explored per query, when we have to resort to BFS (in the ``bad'' instances).\footnote{To clarify, these tables do not show the average number of edges explored per bad instance, but the average number of edges explored for all instances. Thus, the average number of edges reported is a good indicator of the relative running times.}

\begin{table}[h!]\centering
\caption{Relative performance of the BFS-based algorithms on Rome99.}\label{tab:Rome99-bfs}
\scriptsize
\begin{tabular}{lrrr}\toprule
\multicolumn{3}{c}{Rome99} \\\cmidrule{1-3}
Algorithm & avg \#edges explored & \% of answer by seed \\\midrule
simpleBFS &8,438.23 & \\
biBFS &3,672.39 & \\
sBFS(1) &100.81 &96.09 \\
sBFS(2) &11.25 &99.47 \\
sBFS(5) &1.87 &99.86 \\
sBFS(10) &0.61 &99.92 \\
ChBFS(1) &2.66 &99.93 \\
ChBFS(2) &0.00 &100.00 \\
ChBFS(5) &0.00 &100.00 \\
ChBFS(10) &0.00 &100.00 \\
\bottomrule
\end{tabular}
\end{table}

\begin{table}[!htp]\centering
\caption{Relative performance of the BFS-based algorithms on Google\_small.}\label{tab:Googlesmall-bfs}
\scriptsize
\begin{tabular}{lrrr}\toprule
\multicolumn{3}{c}{Google\_small} \\\cmidrule{1-3}
Algorithm & avg \#edges explored &\% of answer by seed \\\midrule
simpleBFS &1,527.02 & \\
biBFS &148.72 & \\
sBFS(1) &3.95 &97.98 \\
sBFS(2) &3.11 &98.58 \\
sBFS(5) &3.02 &98.66 \\
sBFS(10) &2.98 &98.69 \\
ChBFS(1) &1.67 &99.07 \\
ChBFS(2) &0.56 &99.80 \\
ChBFS(5) &0.05 &99.95 \\
ChBFS(10) &0.02 &99.97 \\
\bottomrule
\end{tabular}
\end{table}

\begin{table}[!h]\centering
\caption{Relative performance of the BFS-based algorithms on Twitter.}\label{tab:Twitter-bfs}
\scriptsize
\begin{tabular}{lrrr}\toprule
\multicolumn{3}{c}{Twitter} \\\cmidrule{1-3}
Algorithm &avg \# edges explored &\% of answer by seed \\\midrule
simpleBFS &4,777.38 & \\
biBFS &179.05 & \\
sBFS(1) &2.39 &97.89 \\
sBFS(2) &0.48 &99.39 \\
sBFS(5) &0.17 &99.68 \\
sBFS(10) &0.10 &99.74 \\
ChBFS(1) &0.25 &99.86 \\
ChBFS(2) &0.00 &100.00 \\
ChBFS(5) &0.00 &100.00 \\
ChBFS(10) &0.00 &100.00 \\
\bottomrule
\end{tabular}
\end{table}

\begin{table}[!htp]\centering
\caption{Relative performance of the BFS-based algorithms on Gnutella25.}\label{tab:Gnutella25-bfs}
\scriptsize
\begin{tabular}{lrrr}\toprule
\multicolumn{3}{c}{Gnutella25} \\\cmidrule{1-3}
Algorithm &avg \# edges explored &\% of answer by seed \\\midrule
simpleBFS &9,747.45 & \\
biBFS &302.88 & \\
sBFS(1) &1.52 &99.07 \\
sBFS(2) &0.18 &99.84 \\
sBFS(5) &0.04 &99.93 \\
sBFS(10) &0.02 &99.94 \\
ChBFS(1) &0.24 &99.93 \\
ChBFS(2) &0.00 &100.00 \\
ChBFS(5) &0.00 &100.00 \\
ChBFS(10) &0.00 &100.00 \\
\bottomrule
\end{tabular}
\end{table}

\begin{table}[!htp]\centering
\caption{Relative performance of the BFS-based algorithms on Lasfm-Asia.}\label{tab:LasfmAsia-bfs}
\scriptsize
\begin{tabular}{lrrr}\toprule
\multicolumn{3}{c}{Lasfm-Asia} \\\cmidrule{1-3}
Algorithm &avg \# edges explored &\% of answer by seed \\\midrule
simpleBFS &40,131.40 & \\
biBFS &997.50 & \\
sBFS(1) &2.09 &99.75 \\
sBFS(2) &0.29 &99.95 \\
sBFS(5) &0.12 &99.97 \\
sBFS(10) &0.06 &99.98 \\
ChBFS(1) &0.26 &99.97 \\
ChBFS(2) &0.00 &100.00 \\
ChBFS(5) &0.00 &100.00 \\
ChBFS(10) &0.00 &100.00 \\
\bottomrule
\end{tabular}
\end{table}

\begin{table}[!htp]\centering
\caption{Relative performance of the BFS-based algorithms on Epinions1.}\label{tab:Epinions1-bfs}
\scriptsize
\begin{tabular}{lrrr}\toprule
\multicolumn{3}{c}{Epinions1} \\\cmidrule{1-3}
Algorithm &avg \# edges explored &\% of answer by seed \\\midrule
simpleBFS &320,355.93 & \\
biBFS &273.78 & \\
sBFS(1) &0.13 &99.93 \\
sBFS(2) &0.02 &99.98 \\
sBFS(5) &0.01 &99.99 \\
sBFS(10) &0.01 &99.99 \\
ChBFS(1) &0.03 &99.99 \\
ChBFS(2) &0.00 &100.00 \\
ChBFS(5) &0.00 &100.00 \\
ChBFS(10) &0.00 &100.00 \\
\bottomrule
\end{tabular}
\end{table}

\begin{table}[!htp]\centering
\caption{Relative performance of the BFS-based algorithms on NotreDame.}\label{tab:NotreDame-bfs}
\scriptsize
\begin{tabular}{lrrr}\toprule
\multicolumn{3}{c}{NotreDame} \\\cmidrule{1-3}
Algorithm &avg \# edges explored &\% of answer by seed \\\midrule
simpleBFS &220,350.71 & \\
biBFS &14,929.37 & \\
sBFS(1) &13.13 &99.87 \\
sBFS(2) &6.16 &99.94 \\
sBFS(5) &4.85 &99.95 \\
sBFS(10) &3.71 &99.96 \\
ChBFS(1) &2.17 &99.99 \\
ChBFS(2) &0.00 &100.00 \\
ChBFS(5) &0.00 &100.00 \\
ChBFS(10) &0.00 &100.00 \\
\bottomrule
\end{tabular}
\end{table}

\begin{table}[!htp]\centering
\caption{Relative performance of the BFS-based algorithms on Stanford.}\label{tab:Stanford-bfs}
\scriptsize
\begin{tabular}{lrrr}\toprule
\multicolumn{3}{c}{Stanford} \\\cmidrule{1-3}
Algorithm &avg \# edges explored &\% of answer by seed \\\midrule
simpleBFS &1,267,171.57 & \\
biBFS &89,474.38 & \\
sBFS(1) &28.14 &99.96 \\
sBFS(2) &13.67 &99.98 \\
sBFS(5) &10.87 &99.98 \\
sBFS(10) &9.31 &99.98 \\
ChBFS(1) &3.67 &100.00 \\
ChBFS(2) &0.01 &100.00 \\
\bottomrule
\end{tabular}
\end{table}

What is remarkable, is that most queries are answered by simply querying the seeds, and very rarely do we have to resort to BFS. Observe that the higher the number of seeds, the higher the probability that they will provide the answer. However, even with a single seed we get very good chances of obtaining the answer directly. As expected, the seeds in which we have initialized a 2FT-SSR oracle (CH-seeds) have better chances to provide the answer. (In some cases, we get $100\%$ of the answers from the CH seeds.) We note that these results essentially explain the very good times that we observe in Table~\ref{tab:RandomQueriesTree}, since even a single seed can provide the answer to most queries (and thus, only rarely do we have to descend to deeper levels of the decomposition tree).

As we can see in Table~\ref{tab:Choudhary-times}, performing the 2FT-SSR queries on the CH-seeds is a very affordable operation, comparable to accessing a few edges. However, the initialization of $\mathtt{ChBFS}$ takes a lot of time to be completed on larger graphs. (We refer to Tables~\ref{tab:init_times} and \ref{tab:memory} for the initialization time and the memory usage of the BFS-based oracles.) For example, as we can see in Table~\ref{tab:init_times}, $\mathtt{ChBFS}(1)$ takes $6$ seconds to be initialized on Gnutella25, but almost five hours on Amazon0302. On the other hand, the initialization of $\mathtt{sBFS}(10)$ (that uses $10$ seeds), takes less than a second on every graph.  Thus, we can see that $\mathtt{sBFS}$ is an affordable heuristic that works remarkably well in practice.

\begin{table*}[h!]
\scriptsize
\caption{Initialization times for the BFS-based oracles (in seconds).
}
\label{tab:init_times}
\hspace{-0.43cm}
\begin{tabular}{lrrrrrrrrrr}\toprule
Graph &$\mathtt{simpleBFS}$ &$\mathtt{biBFS}$ &$\mathtt{sBFS(1)}$ &$\mathtt{sBFS(2)}$ &$\mathtt{sBFS(5)}$ &$\mathtt{sBFS(10)}$ &$\mathtt{ChBFS(1)}$ &$\mathtt{ChBFS(2)}$ &$\mathtt{ChBFS(5)}$ &$\mathtt{ChBFS(10)}$\\\midrule
Google\_small &2e-06 &1e-05 &4e-05 &7e-05 &2e-04 &4e-04 &2e-05 &4e-05 &1e-04 &2e-04 \\
Twitter &2e-06 &1e-05 &3e-04 &7e-04 &2e-03 &3e-03 &5e-01 &9e-01 &2 &5 \\
Rome99 &2e-06 &1e-05 &3e-04 &7e-04 &2e-03 &3e-03 &5e-01 &8e-01 &2 &4\\
Gnutella25 &2e-06 &1e-04 &6e-04 &1e-03 &3e-03 &6e-03 &6 &11 &28 &56 \\
Lastfm-Asia &2e-06 &1e-04 &1e-03 &3e-03 &7e-03 &1e-02 &12 &23 &60 &120 \\
NotreDame &2e-06 &1e-03 &4e-03 &8e-03 &2e-02 &4e-02 &96 &210 &527 &1054 \\
Epinions1 &2e-06 &3e-03 &8e-03 &2e-02 &4e-02 &8e-02 &575 &1144 &2867 &5734 \\
Stanford &2e-06 &7e-03 &4e-02 &7e-02 &2e-01 &4e-01 &4766 &9142 &23050 &46100 \\
Amazon0302 &2e-06 &9e-03 &5e-02 &1e-01 &3e-01 &5e-01 &16685 &33058 &82800 &165601 \\
USA-road-NY &2e-06 &5e-03 &3e-02 &6e-02 &1e-01 &3e-01 &175 &350 &875 &1750 \\
\bottomrule
\end{tabular}
\end{table*}

\begin{table*}[h!]
\scriptsize
\caption{Memory usage of the BFS-based oracles (in MB).
}
\label{tab:memory}
\hspace{-0.43cm}
\begin{tabular}{lrrrrrrrrrr}\toprule
Graph &$\mathtt{simpleBFS}$ &$\mathtt{biBFS}$ &$\mathtt{sBFS(1)}$ &$\mathtt{sBFS(2)}$ &$\mathtt{sBFS(5)}$ &$\mathtt{sBFS(10)}$ &$\mathtt{ChBFS(1)}$ &$\mathtt{ChBFS(2)}$ &$\mathtt{ChBFS(5)}$ &$\mathtt{ChBFS(10)}$\\\midrule
Google\_small &1.24 &1.28 &1.33 &3.06 &3.06 &3.06 &4.70 &6.70 &11.50 &23.00 \\
Twitter &2.88 &2.98 &2.98 &3.13 &3.13 &3.19 &8.45 &13.65 &28.95 &57.90 \\
Rome99 &3.04 &3.08 &3.17 &3.32 &3.32 &3.58 &16.00 &30.00 &68.00 &136.00 \\
Gnutella25 &3.07 &3.10 &3.15 &3.47 &3.73 &4.13 &21.55 &39.36 &91.89 &183.78 \\
Lastfm-Asia &3.54 &3.39 &3.69 &4.01 &4.28 &5.07 &34.59 &64.81 &155.13 &310.26 \\
NotreDame &5.76 &7.13 &8.60 &10.30 &12.42 &16.38 &143.71 &279.53 &666.55 &1333.10 \\
Epinions1 &6.91 &8.82 &9.68 &10.26 &12.11 &14.49 &106.49 &198.22 &469.88 &939.75 \\
Stanford &17.66 &24.89 &29.33 &31.83 &39.48 &51.63 &517.28 &993.07 &2421.00 &4842.00 \\
Amazon0302 &15.78 &22.23 &29.26 &36.92 &48.27 &67.38 &974.63 &1898.22 &4718.86 &9437.72 \\
USA-road-NY &12.91 &17.92 &25.88 &34.65 &47.06 &67.76 &1515.19 &2987.64 &7450.00 &14900.00 \\
\bottomrule
\end{tabular}
\end{table*}

\subsection{An improved data structure: organizing the CH seeds on a decomposition tree}
\label{sec:ch-tree}

\begin{table*}[ht!]

\scriptsize
\hspace{-0.45cm}
\begin{tabular}{lrrrrr}\toprule
Graph &avg \# edges ChBFS &avg \# edges ChTree &\% of answer by seed in ChBFS &\% of answer by seed in ChTree \\\midrule
Google\_small &145.06 &5.88 &0.07 &95.37 \\
Twitter &179.26 &4.62 &0.07 &96.06 \\
Gnutella25 &302.91 &0.86 &0.02 &99.47 \\
Epinions1 &273.93 &1.42 &0.00 &99.09 \\
Lastfm-Asia &1,007.88 &62.81 &0.00 &91.32 \\
NotreDame &14,989.20 &51.06 &0.01 &99.39 \\
Stanford &89,723.23 &19.38 &0.01 &99.96 \\
Amazon0302 &19,462.16 &2.19 &0.00 &99.98 \\
USA-road-NY &430,818.55 &348.82 &0.00 &99.91 \\
\bottomrule
\end{tabular}
\caption{\small Simulation for answering 10K queries with $\mathtt{ChBFS}$ and $\mathtt{ChTree}$ using $10$ seeds that have high chance to give rise to a bad instance. This experiment was repeated for $100$ different selections of seeds. We see that $\mathtt{ChTree}$ can answer at least $90\%$ of those instances without resorting to BFS. }\label{tab:ChTree}
\end{table*}

Although $\mathtt{ChBFS}$ is the most efficient heuristic for answering the queries, it has mainly two drawbacks. First, initializing the 2-FT-SSR data structures on the seeds is very costly, and thus we cannot afford to use a lot of seeds.
And second, as noted in Section~\ref{sec:BFS}, and as can be seen in Tables~\ref{tab:Rome99-bfs} and \ref{tab:Googlesmall-bfs}, there are instances of queries where the seeds cannot provide the answer, and therefore we have to resort to BFS.

We can make a more intelligent use of the CH-seeds by organizing them on a decomposition tree. More precisely, we use the CH-seeds as split vertices in order to produce an SCC-tree. This confers two advantages. (1) We may get some extra ``free'' CH-seeds on the intermediary levels of the decomposition tree. (2) We essentially maintain all the reachability information that can be provided from the seeds, as if we had initialized them on the whole graph.

Let us elaborate on points (1) and (2). First, initializing Choudhary's data structure for a single vertex takes $O(mn)$ time. However, every level of the decomposition tree has $O(m)$ edges in total. Thus, on every level of the tree, we can afford to initialize as many Choudhary's data structures as are the nodes in it -- at the total cost of initializing a single data structure. This explains (1). Furthermore, on every leaf of the decomposition tree we can afford to initialize an $\mathtt{sBFS}(1)$ data structure, as these data structures demand $O(n)$ space and can be constructed in linear time in total. We call the resulting oracle $\mathtt{ChTree}$. (Observe that this is essentially an SCC-tree-based oracle as described in Section~\ref{sec:dectree}, with the difference that we terminate the decomposition early, since we use only 10 split vertices, and at every leaf of the tree we initialize an $\mathtt{sBFS}(1)$ data structure.) The proof for (2) is essentially given by induction on the level of the decomposition tree, and it is a consequence of the way in which we answer the queries using the seeds (see Section~\ref{sec:dectree}).
Thus, although the data structures on the seeds are initialized on subgraphs of the original graph, whenever we have to use them in order to answer a query, they can provide the same reachability information as if we had initialized them on the whole graph.

We have conducted an experiment in order to demonstrate the superiority of this idea against $\mathtt{ChBFS}$. Specifically, we first observe that the problem of the bad instances is caused by separation pairs whose removal leaves all the seeds concentrated into small SCCs, whereas the query vertices lie in larger components that are unreachable from the seeds. In our experiments, we used $10$ CH-seeds. However, from Tables~\ref{tab:Rome99-bfs} and \ref{tab:Googlesmall-bfs}, we can see that it is very rare to get bad instances from $10$ random seeds. Thus, we have to contrive a way to get seeds that have a high chance to give rise to a lot of bad instances. To do this, we compute the SAPs of the graph, and we process some of them randomly. If for a SAP $s$ the total number of vertices in the SCCs of $G-s$, except the largest one, is at least $10$, then we select randomly $10$ seeds from those components. Then we generate 10K random queries, where one of the failed vertices is $s$, and the query vertices lie in SCCs that do not contain seeds.

On the one hand, we use $\mathtt{ChBFS}$ to answer the queries. As expected, the seeds almost always fail to provide the answer, and so $\mathtt{ChBFS}$ can do no better than resort to $\mathtt{biBFS}$. (However, sometimes we manage to squeeze out a negative answer from the seeds, due to Observation~\ref{obervation:sc}.) On the other hand, we use  the SCC-decomposition tree $\mathtt{ChTree}$ to answer the queries.

Since the data structure of Choudhary takes a lot of time to be initialized, we could perform a large number of those experiments only by simulating the process of answering the queries using a 2FT-SSR oracle. That is, at every node of the decomposition tree, we simply used bidirectional BFS in order to determine whether the split vertex that corresponds to it reaches the query vertices. In this way, we can still report exactly the percentage of the queries that can be answered without resorting to BFS at the leaves of the decomposition tree. As we can see in Table~\ref{tab:ChTree}, more than $90\%$ of the generated bad instances can be answered by $\mathtt{ChTree}$ without performing BFS, by using only the data structures on the decomposition tree (that has height at most $10$).

\section{Concluding remarks}
Our experiments demonstrate that $\mathtt{sBFS}$ is a remarkably good heuristic for efficiently answering 
$2$-FT-SC queries in practice.  
By relying on the 2FT-SSR oracle of Choudhary~\cite{choudhary:ICALP16}, we can improve the accuracy of this heuristic, and the organization of the CH-seeds into a decomposition tree minimizes the likelihood of bad instances. It seems that picking the most critical nodes for an SCC-tree decomposition with an early termination at a few levels is the best choice for applications, since these nodes decompose the graph quickly into small SCCs, and thus we increase the likelihood of answering the queries fast, using a few $O(1)$-time calls to auxiliary data structures. Finally, we note that our BFS-based heuristic (seeded BFS) seems to be applicable also in answering $k$-FT-SC queries, for small $k\geq 3$. It is an interesting question what is a good choice of seeds that can increase the efficiency of this heuristic.

\bibliographystyle{alpha}
\bibliography{ltg}

\end{document}